\documentclass[11pt,letterpaper]{article}

\usepackage[utf8]{inputenc}
\usepackage[T1]{fontenc}
\usepackage[margin=1in]{geometry}
\usepackage{amsmath,amssymb,amsthm,mathtools}
\usepackage{graphicx}
\usepackage{booktabs}
\usepackage{natbib}
\usepackage{setspace}
\usepackage{float}
\usepackage{caption}
\usepackage{enumitem}
\usepackage[dvipsnames,table]{xcolor}
\usepackage[colorlinks=true,linkcolor=blue!60!black,citecolor=blue!60!black,urlcolor=blue!60!black,breaklinks=true]{hyperref}

\newcommand{\Prob}{\mathbb{P}}

\newcommand{\WQTE}{\operatorname{WQTE}}
\newcommand{\amin}{\alpha_{\wedge}}
\newcommand{\amax}{\alpha_{\vee}}
\newcommand{\source}{s}
\newcommand{\recv}{r}

\theoremstyle{plain}
\newtheorem{theorem}{Theorem}
\newtheorem{lemma}[theorem]{Lemma}
\newtheorem{proposition}[theorem]{Proposition}
\newtheorem{corollary}[theorem]{Corollary}
\theoremstyle{definition}

\newtheorem{assumption}{Assumption}
\newtheorem{hypothesis}{Hypothesis}
\theoremstyle{remark}

\title{\textbf{Scale-Ordered Contagion: A Spectral Theory of\\
Heterogeneous Information Adaptation in Financial Networks}}

\author{
Avishek Bhandari\thanks{School of Humanities, Social Sciences and Management, Indian Institute of Technology Bhubaneswar. Email: \texttt{avishekb@iitbbs.ac.in}.\ Corresponding author.}
\and
Ipsita Parida\thanks{School of Humanities, Social Sciences and Management, Indian Institute of Technology Bhubaneswar. Email: \texttt{a23hs09014@iitbbs.ac.in}.}
}

\date{\today}

\begin{document}
\maketitle

\begin{abstract}
\noindent The speed at which a market processes information shapes not only how
strongly but \emph{at what frequency} a cross-border shock registers in it. The
manner in which such information adaptation governs the frequency composition of
transmitted shocks has, however, received limited formal attention in the contagion
literature, and it is this gap which motivates the present study. We develop a spectral
theory of financial contagion under heterogeneous information adaptation, building on
the Heterogeneous Agents Contagion versus Interdependence (HACI) framework in which
advanced economies act as fast adapters and emerging economies as slow adapters. We
first identify a structural limitation of the baseline: its information-flow dynamics
are \emph{source-constrained}, so the receiving market's adaptive capacity affects the
magnitude but not the timing or frequency composition of transmitted information; we
prove that fast-adapter and slow-adapter recipients of a common shock share the
half-life $\ln 2/\alpha_\source$. In view of the above, we resolve the limitation by
modelling the receiver as a second exponential information filter, so that observed
transmission is the convolution of source generation and receiver absorption. The
resulting bi-exponential response has a power spectrum equal to the product of two
Lorentzians, one centred at the source rate and one at the receiver rate, and the slower
of the two markets supplies the binding spectral corner. Projecting this spectrum onto a
maximal-overlap discrete wavelet basis yields a closed-form transfer-entropy-by-scale
profile and three falsifiable predictions, jointly the Scale-Ordered Contagion
Hypothesis (SOCH): (A) the wavelet scale at which directed transfer entropy peaks is set
by the slower market's adaptation rate, so emerging-inclusive pairs peak at coarser
scales; (B) the \emph{shape} of the scale profile is symmetric across direction for a
given pair, because the transmission spectrum is symmetric in the two rates; and (C) the
\emph{magnitude} is directionally asymmetric, scaling with connectivity and
source-shock content. We turn the theory into a profile-matching estimator that
recovers adaptation rates and endogenises the fast/slow classification, and we test
all three predictions on a panel of G20 equity markets spanning January~2006 through
March~2026. The scale-ordering prediction is supported (peak scales are coarser for
pairs containing slower markets, $p=0.042$); the shape-symmetry prediction---the
theory's sharpest and most novel restriction---is supported across all twenty-eight
unordered market pairs ($p>0.05$ throughout); the magnitude prediction holds
directionally but is not significant in the working sample; and the endogenous
classifier isolates the slowest adapters cleanly while, as the identification
analysis itself predicts, it cannot separate fast markets whose spectral corner
approaches the sampling Nyquist frequency. The framework supplies a structural,
testable account of why short-band and long-band contagion differ systematically
with the adaptation speeds of the markets involved.

\medskip
\noindent\textbf{Keywords:} financial contagion; transfer entropy; wavelet
decomposition; quantile dependence; information rigidity; heterogeneous agents;
spectral methods; G20 equity markets.

\smallskip
\noindent\textbf{JEL classification:} G01, G15, C58, D85.
\end{abstract}

\clearpage
\onehalfspacing

\section{Introduction}\label{sec:intro}

The relationship between financial shock transmission and the speed at which markets
process incoming information has been the prime focus of researchers for many years.
It may be noted that such transmission is governed not merely by the strength of the
linkages between markets, but also, and rather more subtly, by the speed at which each
market processes the information reaching it. A long line of work distinguishes
interdependence, the stable comovement that persists across tranquil and crisis periods,
from contagion, the increase in cross-market linkage that arises after a shock
\citep{ForbesRigobon2002,DieboldYilmaz2012}, and a parallel literature on information
rigidity, beginning with \citet{MankiwReis2002} and formalised by
\citet{CoibionGorodnichenko2015}, has established that economic agents update their
information sets infrequently and at heterogeneous rates. The Heterogeneous Agents
Contagion versus Interdependence (HACI) framework combines these ideas by treating
advanced economies as fast adapters with a high information-absorption rate $\alpha_F$
and emerging economies as slow adapters with rate $\alpha_S<\alpha_F$, and by arguing
that the difference in adaptation speeds generates systematic asymmetries in cross-market
information flow. That framework is the theoretical core of a broader research programme
on heterogeneous-adaptation contagion whose empirical apparatus is wavelet-quantile
transfer entropy; the present paper positions itself in this context and seeks to examine
the spectral theory that connects the apparatus to a structural mechanism, and brings its
predictions to data.

The starting point of this paper is a limitation of the baseline HACI dynamics that,
once stated precisely, is difficult to defend on economic grounds. It may be noted that
in the baseline specification the cumulative information flow from a source market to a
receiving market decays at the \emph{source's} adaptation rate; the receiver's rate
enters only as a multiplicative constant on the level of the response. A direct
consequence, which we state and prove, is that the half-life of the contagion response
is identical for fast-adapter and slow-adapter recipients of a common shock. The
receiving market's adaptive capacity therefore has no effect whatsoever on the timing or
the frequency composition of the transmitted shock---a restrictive feature that sits
uneasily with the economic content of information rigidity, under which a market that
processes information slowly ought to register an incoming shock more gradually, and
hence at lower frequencies, than a market that processes information quickly. Thus the
baseline specification, though internally consistent, is at odds with the very intuition
that motivates the heterogeneous-adaptation programme.

In view of the above, we resolve this limitation by allowing the receiving market to
process incoming information through its own exponential filter, which is the natural
formalisation of the simple idea that the receiver, much like the source, updates its
information set only at a finite rate. Observed transmission then becomes the convolution
of the source's information-generation kernel with the receiver's absorption kernel. The
convolution of two exponentials is bi-exponential, and its power spectral density is the
product of two Lorentzian (Cauchy) spectra, one with corner frequency at the source rate
$\alpha_\source$ and one at the receiver rate $\alpha_\recv$. Because a product of
low-pass filters is dominated by the filter with the lower corner frequency, the
\emph{slower} of the two markets governs the frequency band in which transmitted
information concentrates. This single modification restores a symmetric and economically
sensible role for both markets and, as we show, carries sharp observable implications
once the response is measured across time scales.

The bridge from the continuous-time information dynamics to observable quantities runs
through multi-scale transfer entropy, which is the empirical object used to measure
directed information flow in financial networks \citep{Schreiber2000,KraskovStoegbauerGrassberger2004}.
Projecting the transmission spectrum onto a maximal-overlap discrete wavelet transform
(MODWT) basis \citep{PercivalWalden2000,GencaySelcukWhitcher2005} yields a closed-form
expression for transfer entropy by wavelet scale, and from it three falsifiable
predictions, which we collect under the name Scale-Ordered Contagion Hypothesis (SOCH).
The hypothesis states that the scale at which directed transfer entropy peaks is set by
the slower market's adaptation rate, so that emerging-market pairs peak at coarser
scales; that the \emph{shape} of the scale profile is the same in both directions for a
given pair, because the transmission spectrum is symmetric in the two adaptation rates;
and that the \emph{magnitude} of transfer entropy is directionally asymmetric, scaling
with directional connectivity and with the information content of the source's shock.

The shape-symmetry prediction is, to our knowledge, novel, and it is pertinent to
emphasise how sharp it is. It distinguishes the heterogeneous-adaptation mechanism from
contagion mechanisms that would generate directional asymmetry in the frequency
composition itself, such as mechanisms operating through asymmetric balance-sheet
exposure or asymmetric trade dependence: those mechanisms need not respect shape
symmetry, whereas the adaptation-speed mechanism must. The prediction is also directly
testable, because the normalised scale profile in each direction is an estimable object
and the two directions can be compared against a sampling null.

The theory does more than restrict the shape of observed profiles; it also makes
adaptation rates estimable, which we regard as a further contribution. Fitting an
observed wavelet-quantile transfer-entropy profile to the closed-form spectrum recovers
the pair's two adaptation rates, and pooling the directed pairs that share a market
yields a market-level rate; the fast/slow classification, exogenous in the baseline,
thereby becomes a data-determined and testable object. We make this estimator
operational, prove the conditions under which its parameters are identified, characterise
its finite-sample behaviour by Monte Carlo, and then take it, together with the three
SOCH predictions, to a panel of eighteen G20 equity markets over 2006--2026.

The empirical findings are, we believe, honest and informative rather than uniformly
confirmatory, which is the appropriate standard for a sharp theory. We observe that the
scale-ordering prediction is supported: pairs containing more emerging markets peak at
coarser scales, with a positive and significant association. The shape-symmetry
prediction is supported across every one of the twenty-eight unordered market pairs---the
strongest of our results and the one that most distinguishes the mechanism from its
rivals. The magnitude prediction holds in direction---advanced-to-emerging flows dominate
in most pairs---but is not significant in the working sample. The endogenous classifier
identifies the two clearest emerging markets, India and China, as decisively the slowest
adapters, but cannot separate the faster markets, whose spectral corners lie near the
sampling Nyquist frequency; this is not a defect of the data but a property that the
identification analysis predicts in advance. Hence the two predictions that rest on the
well-identified slower-market rate are supported, and the predictions that depend on the
weakly-identified fast rates are correspondingly weaker.

The remainder of the paper is organised as follows. Section~\ref{sec:literature}
places the contribution in the contagion, information-rigidity, and wavelet-quantile
literatures. Section~\ref{sec:baseline} restates the HACI baseline and proves the
source-constrained timing result. Section~\ref{sec:receiver} introduces the
receiver-filtering extension and derives the bi-exponential transmission response.
Section~\ref{sec:spectrum} establishes the product-Lorentzian spectrum, the closed-form
scale profile, the peak-location result, and a discrete-time companion.
Section~\ref{sec:soch} states and proves the three parts of SOCH and an early-warning
corollary. Section~\ref{sec:identification} develops identification, the
profile-matching estimator, and a Monte Carlo study. Section~\ref{sec:data} describes
the data. Section~\ref{sec:results} reports the empirical results.
Section~\ref{sec:relation} relates the theory to frequency-domain connectedness and
systemic-risk measurement. Section~\ref{sec:conclude} concludes. An appendix collects
the longer derivations.

\section{Related Literature}\label{sec:literature}

\subsection{Contagion, interdependence, and information rigidity}

The manner in which financial shocks propagate across borders has been a prime focus
of researchers for many years, and the modern study of cross-border contagion takes as
its point of departure the fundamental distinction between interdependence and contagion
proper \citep{ForbesRigobon2002}. A substantial body of work has since established that
cross-market comovement intensifies during periods of financial stress, that information
flows tend to reorganise around centres of distress, and that the topology of the global
financial network shifts in ways that are, by and large, measurable and systematic
\citep{DieboldYilmaz2012,DieboldYilmaz2014,BillioGetmanskyLoPelizzon2012}. It may be
noted, however, that this literature has largely treated the \emph{speed} of
cross-market adjustment as a nuisance property to be differenced away, rather than as
an object of structural interest in its own right. The information-rigidity literature
supplies exactly the structural content that has been missing from this picture.
\citet{MankiwReis2002} model agents who update their information sets only
intermittently, and \citet{CoibionGorodnichenko2015} document, across a broad range of
forecasters and countries, that information is updated infrequently and at heterogeneous
rates, with first-order consequences for how shocks propagate through the system. The
HACI framework imports this heterogeneity into a contagion setting, positing that the
adaptation-rate gap between advanced and emerging markets is itself a source of
systematic asymmetry in transmission. The present paper positions itself in this context
and seeks to provide the missing spectral mechanics of that idea.

\subsection{Heterogeneous adaptation and the source-constrained baseline}

The HACI baseline writes cumulative cross-market information flow as a product of a
source-generated impulse and a receiver-specific absorptive constant. As we show in
Section~\ref{sec:baseline}, this construction makes the receiver's rate enter the level
of the response but not its dynamics, so that the timing of adjustment is governed
entirely by the source market. The economic implausibility of this property---two markets
of very different sophistication absorbing the same shock on precisely the same
schedule---motivates the central modelling move of the present paper, which is to endow
the receiver with its own filter. The idea that absorption is itself a dynamic,
rate-limited process is, of course, standard in the information-rigidity tradition; what
is new here is the observation that, once the receiver is permitted to filter, the
transmitted response and its frequency content are jointly determined by both adaptation
rates in a particular and empirically testable way.

\subsection{Wavelet-quantile transfer entropy}

Directed information flow between financial time series is naturally measured by
transfer entropy \citep{Schreiber2000}, estimated either through nearest-neighbour
methods \citep{KraskovStoegbauerGrassberger2004} or, in the Gaussian case, through its
exact equivalence with Granger causality \citep{BarnettBarrettSeth2009}. Two
refinements matter considerably for the contagion application. First, the
maximal-overlap discrete wavelet transform decomposes a return series into
scale-specific components, thereby separating short-horizon trader-driven dynamics from
the longer-horizon dynamics that characterise institutional investors
\citep{PercivalWalden2000,GencaySelcukWhitcher2005,Crowley2007}. Second, conditioning
on quantiles of the recipient distribution serves to isolate tail-specific propagation
which aggregate measures tend to obscure \citep{HanLintonOkaWhang2016,AndoEtAl2022}.
The combination, wavelet-quantile transfer entropy (WQTE), is the empirical primitive of
the heterogeneous-adaptation programme and the object whose scale profile our theory
seeks to predict. We notice, in this connection, that frequency-domain connectedness
\citep{BarunikKrehlik2018} decomposes variance spillovers into frequency bands
descriptively; our contribution is to supply a structural reason why those bands should
differ systematically with adaptation speeds, which is a rather different---and, we
would argue, more informative---undertaking.

\section{The HACI Baseline and Its Source-Constrained Limitation}\label{sec:baseline}

The analysis begins with a careful consideration of the baseline HACI information
dynamics, and we make precise the sense in which they are source-constrained. A shock
originates in a source market $\source$ at time $t_0$ and releases information that
diffuses into the market at the source's own absorption rate.

\begin{assumption}[Source information generation]\label{ass:source}
The cumulative information released by a shock in market $\source$ and available for
transmission evolves as
$I_{\source,t}=I_{\source,\infty}\bigl(1-e^{-\alpha_\source(t-t_0)}\bigr)$ for
$t\ge t_0$, where $I_{\source,\infty}$ is the total information content of the shock and
$\alpha_\source>0$ is the source market's absorption rate. The instantaneous generation
rate is $\phi_\source(t)=I_{\source,\infty}\,\alpha_\source\,e^{-\alpha_\source(t-t_0)}$.
\end{assumption}

In the baseline model the cumulative information flow from $\source$ to a receiver
$\recv$ is obtained by scaling the source impulse by the receiver's absorptive capacity
and by a connectivity-and-friction term,
\begin{equation}\label{eq:cif-baseline}
CIF^{\,\mathrm{base}}_{\source\to\recv}(t)
= \kappa_{\source\recv}\, I_{\source,\infty}\,
\frac{\alpha_\recv}{1+\lambda_{\source\recv}}\,
\bigl(1-e^{-\alpha_\source(t-t_0)}\bigr),
\end{equation}
where $\kappa_{\source\recv}\ge 0$ is the directional connectivity,
$\lambda_{\source\recv}\ge0$ the information friction, and
$\alpha_\recv\in\{\alpha_S,\alpha_F\}$ the receiver's rate. It may be noted that the
receiver's rate enters only as a scalar multiplier on the level of the response, whereas
the time profile $1-e^{-\alpha_\source(t-t_0)}$ depends entirely on the source rate and
not at all on the receiver. The following result makes the consequence of this
construction explicit.

\begin{proposition}[Source-constrained timing]\label{prop:source-timing}
Under \eqref{eq:cif-baseline}, the half-life of the cumulative information flow---the
elapsed time at which it attains one half of its long-run value---is
$t_{1/2}-t_0=\ln 2/\alpha_\source$, independent of the receiver's rate $\alpha_\recv$.
Consequently, for a common source, a fast-adapter receiver and a slow-adapter receiver
reach one half of their respective long-run responses at the same time.
\end{proposition}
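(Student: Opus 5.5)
The plan is to compute the long-run value of \eqref{eq:cif-baseline} directly and then solve the half-value equation in closed form. First I will factor the cumulative flow as a time-invariant level times a time profile,
\[
CIF^{\,\mathrm{base}}_{\source\to\recv}(t)=A_{\source\recv}\,\bigl(1-e^{-\alpha_\source(t-t_0)}\bigr),
\qquad
A_{\source\recv}:=\kappa_{\source\recv}\, I_{\source,\infty}\,\frac{\alpha_\recv}{1+\lambda_{\source\recv}} .
\]
Because $\alpha_\source>0$, the exponential vanishes as $t\to\infty$, so the long-run value is exactly $A_{\source\recv}$. To make the notion of ``one half of its long-run value'' meaningful, I will assume the nondegenerate case $A_{\source\recv}>0$, that is, $\kappa_{\source\recv}>0$ and $I_{\source,\infty}>0$. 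When $\kappa_{\source\recv}=0$ the flow is identically zero and the half-life is vacuous; I will note this explicitly.

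Next I will set $CIF^{\,\mathrm{base}}_{\source\to\recv}(t_{1/2})=\tfrac12 A_{\source\recv}$ and divide by $A_{\source\recv}$. This gives $e^{-\alpha_\source(t_{1/2}-t_0)}=\tfrac12$, and hence $t_{1/2}-t_0=\ln 2/\alpha_\source$. The half-life is well defined, meaning unique, because the profile $1-e^{-\alpha_\source(t-t_0)}$ is continuous and strictly increasing from $0$ at $t_0$ toward $1$. So the level $\tfrac12$ is attained exactly once.

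The key observation is that $\alpha_\recv$, $\kappa_{\source\recv}$ and $\lambda_{\source\recv}$ enter only through $A_{\source\recv}$, and $A_{\source\recv}$ cancels on both sides of the defining equation. The solution therefore depends on $\alpha_\source$ alone. For the consequence, I will fix a common source $\source$ and two receivers with rates $\alpha_F$ and $\alpha_S$. The connectivities and frictions may be different, provided they give positive levels. Each receiver's flow is its own level times the same profile, so both reach half of their respective long-run values at $t_0+\ln 2/\alpha_\source$.

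The argument has no real obstacle. The only point needing care is stating the nondegeneracy condition $A_{\source\recv}>0$, together with the monotonicity that makes the half-life unique. It also helps to remark that the normalised profile $CIF/A_{\source\recv}$ is itself independent of $\alpha_\recv$. That is a stronger statement than equal half-lives: every quantile of the adjustment time coincides across receivers, which anticipates the frequency-composition claim made in the introduction.
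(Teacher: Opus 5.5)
Your proposal is correct and follows the paper's proof: identify the long-run level, set the flow equal to half of it, cancel the level (which is where $\alpha_\recv$ drops out), and solve $e^{-\alpha_\source(t_{1/2}-t_0)}=\tfrac12$. The nondegeneracy condition ($\kappa_{\source\recv}I_{\source,\infty}>0$) and the monotonicity argument for uniqueness are harmless refinements that the paper leaves implicit; note only that your $A_{\source\recv}$ includes $\alpha_\recv$, unlike the paper's $A_{\source\recv}$ in Proposition~\ref{prop:biexp}, so a different symbol would avoid a notational clash.
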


\begin{proof}
The long-run value is $\kappa_{\source\recv}I_{\source,\infty}\alpha_\recv/
(1+\lambda_{\source\recv})$. Setting the flow equal to half of it gives
$1-e^{-\alpha_\source(t_{1/2}-t_0)}=\tfrac12$, hence
$e^{-\alpha_\source(t_{1/2}-t_0)}=\tfrac12$ and $t_{1/2}-t_0=\ln2/\alpha_\source$; the
receiver rate $\alpha_\recv$ divides out and does not appear.
\end{proof}

Proposition~\ref{prop:source-timing} isolates precisely the limitation we seek to
address. The working principle of the baseline is quite simple, and its implication is
stark: the receiver's adaptive capacity is permitted to affect only the amplitude of the
contagion response, not its timing, and---as Section~\ref{sec:spectrum} shows in
detail---not its frequency composition either. Thus a slow-adapting market and a
fast-adapting market, when exposed to the same source shock, exhibit contagion responses
that differ only by a scalar multiple. Intuitively, this is at odds with the economic
content of the information-rigidity literature, which places great weight on the idea
that heterogeneous adaptation rates should produce heterogeneous dynamics, not merely
heterogeneous amplitudes. The next section removes this restriction.

\section{Receiver Filtering and the Bi-Exponential Response}\label{sec:receiver}

In view of the above, we now allow the receiving market to process incoming information
through its own exponential filter, which is the natural formalisation of the simple
idea that the receiver, much like the source, updates its information set only at a
finite rate.

\begin{assumption}[Receiver absorption]\label{ass:receiver}
Market $\recv$ absorbs an instantaneous unit of incoming information through the causal
exponential kernel $g_\recv(\tau)=\alpha_\recv\,e^{-\alpha_\recv\tau}$ for $\tau\ge0$,
which integrates to one, so receiver filtering redistributes incoming information over
time without creating or destroying it.
\end{assumption}

The information from the source that registers in the receiver is then the convolution
of the source generation rate with the receiver absorption kernel, scaled by
connectivity and friction:
$\psi_{\source\to\recv}(t)=\frac{\kappa_{\source\recv}}{1+\lambda_{\source\recv}}
\,(\phi_\source*g_\recv)(t)$.

Following the standard convolution argument, we arrive at the following result, which
constitutes the central analytical contribution of this section.

\begin{proposition}[Bi-exponential transmission response]\label{prop:biexp}
Under Assumptions~\ref{ass:source} and~\ref{ass:receiver}, and normalising $t_0=0$, the
transmitted response is, for $\alpha_\source\neq\alpha_\recv$,
\begin{equation}\label{eq:biexp}
\psi_{\source\to\recv}(t)
= A_{\source\recv}\,
\frac{\alpha_\source\alpha_\recv}{\alpha_\source-\alpha_\recv}\,
\bigl(e^{-\alpha_\recv t}-e^{-\alpha_\source t}\bigr),\qquad
A_{\source\recv}\equiv\frac{\kappa_{\source\recv}I_{\source,\infty}}{1+\lambda_{\source\recv}},
\end{equation}
and, in the confluent case $\alpha_\source=\alpha_\recv=\alpha$,
$\psi_{\source\to\recv}(t)=A_{\source\recv}\,\alpha^2 t\,e^{-\alpha t}$. The total
transmitted information is symmetric in the two rates,
$\int_0^\infty\psi_{\source\to\recv}(t)\,dt=A_{\source\recv}$.
\end{proposition}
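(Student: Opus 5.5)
The plan is to compute the convolution $(\phi_\source*g_\recv)(t)$ directly from its definition, using that both kernels are causal. With $t_0=0$ and the convention that $\phi_\source$ and $g_\recv$ vanish on the negative half-line, the convolution integral reduces to
\[
(\phi_\source*g_\recv)(t)=\int_0^t I_{\source,\infty}\alpha_\source e^{-\alpha_\source u}\,\alpha_\recv e^{-\alpha_\recv(t-u)}\,du
= I_{\source,\infty}\alpha_\source\alpha_\recv\, e^{-\alpha_\recv t}\int_0^t e^{(\alpha_\recv-\alpha_\source)u}\,du ,
\]
for $t\ge0$, and it is zero for $t<0$. The remaining integral is elementary, and the natural case split is on whether $\alpha_\recv-\alpha_\source$ vanishes.

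When $\alpha_\source\neq\alpha_\recv$, the inner integral equals $(e^{(\alpha_\recv-\alpha_\source)t}-1)/(\alpha_\recv-\alpha_\source)$. Multiplying by $e^{-\alpha_\recv t}$ gives $(e^{-\alpha_\source t}-e^{-\alpha_\recv t})/(\alpha_\recv-\alpha_\source)$, and flipping the sign of numerator and denominator yields exactly the form $(e^{-\alpha_\recv t}-e^{-\alpha_\source t})/(\alpha_\source-\alpha_\recv)$ in \eqref{eq:biexp}. Multiplying by $\kappa_{\source\recv}/(1+\lambda_{\source\recv})$ then produces the prefactor $A_{\source\recv}$. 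In the confluent case the integrand is identically one, so the integral is $t$ and $\psi_{\source\to\recv}(t)=A_{\source\recv}\alpha^2te^{-\alpha t}$. As a consistency check, I would note that this is also the limit of \eqref{eq:biexp} as $\alpha_\recv\to\alpha_\source$, since the difference quotient $(e^{-\alpha_\recv t}-e^{-\alpha_\source t})/(\alpha_\source-\alpha_\recv)$ tends to $t e^{-\alpha t}$.

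For the total transmitted information, the cleanest route is Fubini/Tonelli. All integrands are nonnegative, so the integral of a convolution factorises as the product of the integrals: $\int_0^\infty\phi_\source=I_{\source,\infty}$ and $\int_0^\infty g_\recv=1$ by Assumption~\ref{ass:receiver}. Hence $\int_0^\infty\psi_{\source\to\recv}=A_{\source\recv}$. Alternatively, one can integrate \eqref{eq:biexp} term by term:
\[
\frac{\alpha_\source\alpha_\recv}{\alpha_\source-\alpha_\recv}\Bigl(\frac1{\alpha_\recv}-\frac1{\alpha_\source}\Bigr)=1 .
\]
This expression is visibly symmetric under swapping $\alpha_\source$ and $\alpha_\recv$.

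No step is a real obstacle. The only points needing care are the sign bookkeeping that puts the expression into the stated orientation, and confirming that the formula is nonnegative in both orderings of the rates. Nonnegativity holds because the sign of $\alpha_\source-\alpha_\recv$ matches the sign of $e^{-\alpha_\recv t}-e^{-\alpha_\source t}$ for $t>0$.
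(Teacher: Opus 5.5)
Your proposal is correct and follows the paper's proof essentially step for step. It uses the same reduction of the causal convolution to $I_{\source,\infty}\alpha_\source\alpha_\recv e^{-\alpha_\recv t}\int_0^t e^{(\alpha_\recv-\alpha_\source)u}\,du$, the same case split on whether $\alpha_\source=\alpha_\recv$, and the same fact that the convolution integrates to the product of the kernel integrals; your term-by-term integration and confluent-limit check are extra sanity checks the paper omits.
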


\begin{proof}
With $t_0=0$, $(\phi_\source*g_\recv)(t)=I_{\source,\infty}\alpha_\source\alpha_\recv
e^{-\alpha_\recv t}\int_0^t e^{-(\alpha_\source-\alpha_\recv)u}\,du$. For
$\alpha_\source\neq\alpha_\recv$ the integral is
$(1-e^{-(\alpha_\source-\alpha_\recv)t})/(\alpha_\source-\alpha_\recv)$, which after
multiplication by $e^{-\alpha_\recv t}$ gives \eqref{eq:biexp}; for
$\alpha_\source=\alpha_\recv$ the integral is $t$, giving the confluent form. Each
kernel integrates to one, so the convolution integrates to $I_{\source,\infty}$ and the
total, after scaling, is $A_{\source\recv}$, manifestly invariant under
$\alpha_\source\leftrightarrow\alpha_\recv$.
\end{proof}

Two features are particularly pertinent for the contagion application. We notice, first,
that the response now depends on both adaptation rates through the bi-exponential
difference, so the timing limitation of Proposition~\ref{prop:source-timing} no longer
holds: the receiver's rate enters the dynamics of the response, not merely its level. Moreover, and rather more subtly, the
\emph{total} transmitted information is symmetric in the two rates, which leads us to
conclude that any directional asymmetry in transmitted quantity must arise from the
connectivity and shock-content term $A_{\source\recv}$, and not from adaptation speeds
themselves. These two observations become, in the frequency domain, the content of the
Scale-Ordered Contagion Hypothesis.

\section{The Transmission Spectrum and Its Wavelet Decomposition}\label{sec:spectrum}

We pass to the frequency domain, where the bi-exponential response assumes a particularly
transparent and tractable form. It may be noted that all symbolic results presented in
this section were verified independently with a computer algebra system, so that the
reader may place full confidence in the derivations which follow.

\begin{proposition}[Product-Lorentzian spectrum]\label{prop:spectrum}
The Fourier transform of \eqref{eq:biexp} is
$\hat\psi_{\source\to\recv}(\omega)=A_{\source\recv}\,
\frac{\alpha_\source}{\alpha_\source+i\omega}\,
\frac{\alpha_\recv}{\alpha_\recv+i\omega}$, and the power spectral density is the
product of two Lorentzians,
\begin{equation}\label{eq:psd}
S_{\source\to\recv}(\omega)=\bigl|\hat\psi_{\source\to\recv}(\omega)\bigr|^2
=A_{\source\recv}^2\,
\frac{\alpha_\source^2}{\alpha_\source^2+\omega^2}\,
\frac{\alpha_\recv^2}{\alpha_\recv^2+\omega^2},
\end{equation}
which is invariant under $\alpha_\source\leftrightarrow\alpha_\recv$.
\end{proposition}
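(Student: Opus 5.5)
The cleanest route avoids transforming the bi-exponential difference \eqref{eq:biexp} term by term and instead uses the convolution structure from which it was built. By Proposition~\ref{prop:biexp}, with $t_0=0$,
\[
\psi_{\source\to\recv}=\frac{\kappa_{\source\recv}}{1+\lambda_{\source\recv}}\,(\phi_\source*g_\recv).
\]
Both factors are causal and integrable, since $\alpha_\source,\alpha_\recv>0$. The convolution theorem for $L^1(\mathbb{R})$ functions therefore applies, and gives
\[
\hat\psi_{\source\to\recv}(\omega)=\frac{\kappa_{\source\recv}}{1+\lambda_{\source\recv}}\,\hat\phi_\source(\omega)\,\hat g_\recv(\omega).
\]
I will use the convention $\hat f(\omega)=\int_{-\infty}^{\infty} f(t)e^{-i\omega t}\,dt$, which is the one consistent with the stated denominators $\alpha+i\omega$.

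The next step is the elementary integral
\[
\int_0^\infty \alpha e^{-\alpha t}e^{-i\omega t}\,dt=\frac{\alpha}{\alpha+i\omega}\qquad(\alpha>0).
\]
It converges absolutely because $|e^{-(\alpha+i\omega)t}|=e^{-\alpha t}$. Applying it with $\alpha=\alpha_\recv$ gives $\hat g_\recv(\omega)=\alpha_\recv/(\alpha_\recv+i\omega)$. Applying it with $\alpha=\alpha_\source$, and remembering the prefactor $I_{\source,\infty}$, gives $\hat\phi_\source(\omega)=I_{\source,\infty}\,\alpha_\source/(\alpha_\source+i\omega)$. Collecting constants into $A_{\source\recv}$ yields the stated transform.

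As a consistency check, I will also transform \eqref{eq:biexp} directly. The difference of the two transforms simplifies as
\[
\frac{1}{\alpha_\recv+i\omega}-\frac{1}{\alpha_\source+i\omega}=\frac{\alpha_\source-\alpha_\recv}{(\alpha_\source+i\omega)(\alpha_\recv+i\omega)},
\]
and the factor $\alpha_\source-\alpha_\recv$ cancels the prefactor in \eqref{eq:biexp}. The confluent case follows either by continuity in $\alpha_\recv\to\alpha_\source$, or directly from $\int_0^\infty t e^{-(\alpha+i\omega)t}\,dt=(\alpha+i\omega)^{-2}$. Because the convolution argument never uses $\alpha_\source\neq\alpha_\recv$, it covers both cases at once.

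For the power spectral density I take the modulus squared. Moduli are multiplicative, and $|\alpha/(\alpha+i\omega)|^2=\alpha^2/(\alpha^2+\omega^2)$ for real $\alpha,\omega$, which gives \eqref{eq:psd}. Invariance under $\alpha_\source\leftrightarrow\alpha_\recv$ is immediate because the product of the two Lorentzian factors is commutative. In fact the complex transform $\hat\psi$ itself is symmetric, since $A_{\source\recv}$ does not depend on the rates. Swapping the rates alone leaves $A_{\source\recv}$ fixed; only reversing the direction of the pair changes it.

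There is no real obstacle here. The only points needing care are fixing a Fourier sign convention consistent with the stated formula, and justifying the convolution theorem, which $L^1$ integrability of the exponential kernels supplies. I would also remark that $S_{\source\to\recv}$ is the energy spectrum of a deterministic impulse response rather than the spectrum of a stationary process. The interpretation as a power spectrum comes from treating $\psi_{\source\to\recv}$ as the transfer kernel of a linear filter driven by white-noise shocks. In that setting the output spectrum equals $|\hat\psi|^2$ times the flat input spectrum, which I would absorb into $A_{\source\recv}^2$.
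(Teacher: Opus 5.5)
Your proof is correct and takes essentially the same route as the paper: you transform each causal exponential kernel to $\alpha/(\alpha+i\omega)$, multiply the transforms via the convolution theorem, and take the squared modulus. Your additions are sound details the paper leaves implicit. They are the direct transform of the bi-exponential as a check, the observation that the convolution route covers the confluent case without separate treatment, and the $L^1$ justification of the convolution theorem.
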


\begin{proof}
The causal exponential $\alpha e^{-\alpha t}$ has transform $\alpha/(\alpha+i\omega)$
for $\mathrm{Re}\,\alpha>0$; by the convolution theorem the transform of the response
is the product of the two factors, and the squared modulus, evaluated at real $\omega$
using $|\alpha/(\alpha+i\omega)|^2=\alpha^2/(\alpha^2+\omega^2)$, gives \eqref{eq:psd}.
\end{proof}

\begin{corollary}[Binding spectral corner]\label{cor:corner}
Write $\amin=\min(\alpha_\source,\alpha_\recv)$ and $\amax=\max(\alpha_\source,\alpha_\recv)$.
The density \eqref{eq:psd} is approximately flat for $\omega\ll\amin$, decays as
$\omega^{-2}$ for $\amin\ll\omega\ll\amax$, and decays as $\omega^{-4}$ for
$\omega\gg\amax$. The lower corner $\amin$ marks the onset of roll-off and governs the
band in which transmitted information concentrates.
\end{corollary}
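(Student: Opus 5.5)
Since \eqref{eq:psd} is symmetric under $\alpha_\source\leftrightarrow\alpha_\recv$ (Proposition~\ref{prop:spectrum}), we may rewrite it in terms of the ordered rates:
\[
S_{\source\to\recv}(\omega)=A_{\source\recv}^2\,L_{\amin}(\omega)\,L_{\amax}(\omega),\qquad L_a(\omega)\equiv\frac{a^2}{a^2+\omega^2}=\frac{1}{1+(\omega/a)^2}.
\]
The plan is to study each Lorentzian factor separately across the three frequency regimes and then multiply. The only elementary facts needed concern a single factor. If $\omega\ll a$, then $L_a(\omega)=1-(\omega/a)^2+O((\omega/a)^4)\approx 1$. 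If $\omega\gg a$, then $L_a(\omega)=(a/\omega)^2\,\bigl(1+(a/\omega)^2\bigr)^{-1}\approx a^2\omega^{-2}$. We make ``$\approx$'' precise as a ratio tending to one, with relative error $O(\epsilon^2)$ when the relevant ratio is $\epsilon$.

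Next we combine the two factors regime by regime.
\begin{enumerate}[label=(\roman*)]
\item For $\omega\ll\amin$ we also have $\omega\ll\amax$, so both factors are near one and $S\approx A_{\source\recv}^2$, which is flat.
\item For $\amin\ll\omega\ll\amax$, the slow factor is approximately $\amin^2\omega^{-2}$ while the fast factor is near one, giving $S\approx A_{\source\recv}^2\amin^2\omega^{-2}$.
\item For $\omega\gg\amax$ (hence $\omega\gg\amin$), both factors decay and $S\approx A_{\source\recv}^2\amin^2\amax^2\omega^{-4}$.
\end{enumerate}
Equivalently, $\mathrm{d}\log S/\mathrm{d}\log\omega=-2\,\frac{(\omega/\amin)^2}{1+(\omega/\amin)^2}-2\,\frac{(\omega/\amax)^2}{1+(\omega/\amax)^2}$. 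Each term moves monotonically from $0$ to $-2$, so the log-log slope runs from $0$ through $-2$ to $-4$. This gives a clean alternative proof that avoids error bookkeeping.

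For the final sentence of the statement, the first factor to depart from one as $\omega$ increases is $L_{\amin}$. Since $L_{\amin}\le L_{\amax}$ everywhere, the half-power point of $S$ lies at or below $\amin$. To see this, evaluate $S(\amin)/S(0)=\tfrac12\cdot L_{\amax}(\amin)\le\tfrac12$. The mass of $S$ concentrates in $[0,O(\amin)]$, and one can quantify this using $\int_0^\infty L_{\amin}L_{\amax}\,d\omega=\frac{\pi}{2}\frac{\amin\amax}{\amin+\amax}$, which lies between $\frac{\pi}{4}\amin$ and $\frac{\pi}{2}\amin$. It follows that the effective bandwidth is of order $\amin$, as claimed.

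The main obstacle is interpretive rather than computational. The middle regime is nonempty only if $\amin\ll\amax$, and the phrase ``governs the band'' needs a precise meaning. I would state the result under the proviso of a separated-rates condition and adopt the half-power and integrated-bandwidth bounds above as the formal content of the last claim.
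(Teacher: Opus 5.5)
Your argument is correct and is essentially the paper's own: the paper gives no separate proof, reading the flat, $\omega^{-2}$ and $\omega^{-4}$ regimes directly off the product of the two Lorentzian factors in \eqref{eq:psd} (``a product of low-pass filters is dominated by the filter with the lower corner''), with the separated-corners proviso $\amin\ll\amax$ recorded in the discussion and the appendix, exactly as you propose. Your half-power and integrated-bandwidth bounds for the final sentence go beyond the paper, which instead gives that sentence its operative content later through Lemma~\ref{lem:peak}, locating the maximiser of $\omega S_{\source\to\recv}(\omega)$ (power per octave, the quantity relevant for MODWT bands) in $[\amin/\sqrt3,\amin]$.
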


Corollary~\ref{cor:corner} is, in a very real sense, the economic heart of the theory,
and it is worth dwelling upon its implications at some length. The working principle is
quite simple: whichever market is the slower of the two determines the corner frequency
below which contagion power is permitted to accumulate. Thus, a fast source paired with a
slow receiver will find its transmission spectrum cut off at the receiver's lower corner
frequency; conversely, a slow source paired with a fast receiver encounters the cut-off
imposed by the source's own lower corner. In both cases, and this is the essential
observation, the frequency composition of contagion is governed by the slower of the two
markets, whichever party occupies that position. It may be noted, further, that the
intermediate $\omega^{-2}$ regime is pronounced only when the two corners are well
separated from one another---that is, when $\amin\ll\amax$---whereas for markets with
comparable adaptation rates the spectrum rolls directly from flat toward $\omega^{-4}$
without sustaining an extended $\omega^{-2}$ plateau.

We observe, moreover, that directed financial flow is measured empirically across MODWT
scales. With unit sampling interval and Nyquist frequency $\pi$, detail level $k$ carries
a nominal pass-band $\mathcal{B}_k=[\pi 2^{-k},\,\pi 2^{-(k-1)}]$; for daily data, level~$1$
captures the two-to-four-day band and level~$5$ captures the thirty-two-to-sixty-four-day
band. The power transmitted at any given scale is obtained by integrating the spectrum
over the corresponding band, and it is this quantity which the following proposition
evaluates in closed form.

\begin{proposition}[Closed-form scale power]\label{prop:scale-power}
For $\alpha_\source\neq\alpha_\recv$, the power transmitted at scale $k$ is
$P_k=\int_{\mathcal{B}_k}S_{\source\to\recv}(\omega)\,d\omega$, which evaluates to
\begin{equation}\label{eq:scale-power}
P_k=A_{\source\recv}^2\,
\frac{\alpha_\source^2\alpha_\recv^2}{\alpha_\recv^2-\alpha_\source^2}
\left[\frac{1}{\alpha_\source}\arctan\frac{\omega}{\alpha_\source}
-\frac{1}{\alpha_\recv}\arctan\frac{\omega}{\alpha_\recv}\right]_{\omega=\pi2^{-k}}^{\omega=\pi2^{-(k-1)}}.
\end{equation}
The profile $P_k$ is symmetric in $(\alpha_\source,\alpha_\recv)$ and strictly positive.
\end{proposition}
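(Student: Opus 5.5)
The plan is a partial-fraction decomposition of the product-Lorentzian density \eqref{eq:psd} from Proposition~\ref{prop:spectrum}, followed by termwise integration over the band $\mathcal{B}_k$.

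Treating $\omega^2$ as the variable, for $\alpha_\source\neq\alpha_\recv$ we have
\[
\frac{1}{(\alpha_\source^2+\omega^2)(\alpha_\recv^2+\omega^2)}
=\frac{1}{\alpha_\recv^2-\alpha_\source^2}\left[\frac{1}{\alpha_\source^2+\omega^2}-\frac{1}{\alpha_\recv^2+\omega^2}\right].
\]
To verify it, put the right-hand bracket over a common denominator; the numerator is $\alpha_\recv^2-\alpha_\source^2$. Multiplying by $A_{\source\recv}^2\alpha_\source^2\alpha_\recv^2$ gives $S_{\source\to\recv}(\omega)$ as a constant multiple of a difference of two single Lorentzians.

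Next, each Lorentzian has the elementary antiderivative
\[
\int\frac{d\omega}{\alpha^2+\omega^2}=\frac{1}{\alpha}\arctan\frac{\omega}{\alpha}, \qquad \alpha>0.
\]
Evaluating this between the band endpoints $\pi2^{-k}$ and $\pi2^{-(k-1)}$ yields \eqref{eq:scale-power} exactly. No convergence issue arises, since the band is compact and the integrand is continuous.

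For symmetry, swap $\alpha_\source\leftrightarrow\alpha_\recv$. The prefactor $\alpha_\source^2\alpha_\recv^2/(\alpha_\recv^2-\alpha_\source^2)$ changes sign, and so does the bracketed difference, so the product is unchanged. Alternatively, symmetry follows immediately because $P_k$ is the integral of a density that Proposition~\ref{prop:spectrum} already shows to be symmetric; I would cite this as the cleaner argument.

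For strict positivity, the cleanest route is again to avoid the closed form. The integrand in \eqref{eq:psd} is strictly positive for every real $\omega$ whenever $A_{\source\recv}\neq0$, and $\mathcal{B}_k$ has positive length $\pi2^{-k}$, so $P_k>0$. This needs $\kappa_{\source\recv}>0$ and $I_{\source,\infty}>0$; if $A_{\source\recv}=0$, then $P_k=0$, so I would note this nondegeneracy caveat explicitly.

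If a direct check on the closed form is wanted, set $h(\alpha)=\alpha^{-1}[\arctan(b/\alpha)-\arctan(a/\alpha)]$ with $0<a<b$. Since $h(\alpha)=\int_a^b(\alpha^2+\omega^2)^{-1}d\omega$, it is strictly decreasing in $\alpha$. Hence the bracket $h(\alpha_\source)-h(\alpha_\recv)$ has the same sign as $\alpha_\recv-\alpha_\source$, which matches the sign of the prefactor, and the product is positive.

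The main obstacle, such as it is, lies in this sign bookkeeping in the closed form, because the prefactor and the bracket flip sign together. I would sidestep it by relying on the integral representation. The excluded confluent case $\alpha_\source=\alpha_\recv$ can be mentioned as the continuous limit.
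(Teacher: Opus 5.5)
Your proof is correct and follows the paper's route: you use the same partial-fraction split of the product-Lorentzian and the same termwise arctan integration over $\mathcal{B}_k$, and you note that the prefactor and the bracket are both antisymmetric under $\alpha_\source\leftrightarrow\alpha_\recv$. Your positivity argument is tighter than the paper's, which appeals only to $S_{\source\to\recv}\ge 0$ and so yields only non-negativity; you instead use strict positivity of the integrand on a band of length $\pi2^{-k}$, and you correctly flag that this requires $A_{\source\recv}\neq 0$ (hence $\kappa_{\source\recv}>0$, which the paper's assumption $\kappa_{\source\recv}\ge 0$ does not guarantee).
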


\begin{proof}
Partial fractions give $[(\alpha_\source^2+\omega^2)(\alpha_\recv^2+\omega^2)]^{-1}
=(\alpha_\recv^2-\alpha_\source^2)^{-1}[(\alpha_\source^2+\omega^2)^{-1}
-(\alpha_\recv^2+\omega^2)^{-1}]$; integrating each term with
$\int(\alpha^2+\omega^2)^{-1}d\omega=\alpha^{-1}\arctan(\omega/\alpha)$ over the band
yields \eqref{eq:scale-power}. The prefactor and the arctan bracket are each
antisymmetric under the swap, so their product is symmetric; positivity follows from
$S_{\source\to\recv}\ge0$.
\end{proof}

It is pertinent to observe at this juncture that, because MODWT bands are octaves of
constant width on a logarithmic frequency axis, the per-scale power profile tracks the
quantity $\omega\,S_{\source\to\recv}(\omega)$ evaluated at the band centre. The peak
scale is therefore governed by the maximiser of that quantity, and the following lemma
characterises this maximiser precisely, establishing the bounds which will prove
indispensable to the empirical predictions of the next section.

\begin{lemma}[Spectral peak location]\label{lem:peak}
The function $\omega\,S_{\source\to\recv}(\omega)$ is single-peaked on $(0,\infty)$, and
its maximiser solves $1=2\omega^2/(\alpha_\source^2+\omega^2)+2\omega^2/(\alpha_\recv^2+\omega^2)$.
In the symmetric case $\omega^\star=\alpha/\sqrt3$; in the strongly asymmetric case
$\omega^\star\to\amin$; and in general $\omega^\star\in[\amin/\sqrt3,\amin]$.
\end{lemma}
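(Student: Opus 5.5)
The plan is to work with the logarithmic derivative of $f(\omega)\equiv\omega\,S_{\source\to\recv}(\omega)$, using the explicit form \eqref{eq:psd} from Proposition~\ref{prop:spectrum}. Writing $a=\alpha_\source$ and $b=\alpha_\recv$, we have
\[
f(\omega)=A_{\source\recv}^2 a^2 b^2\,\frac{\omega}{(a^2+\omega^2)(b^2+\omega^2)},
\]
which is smooth and strictly positive on $(0,\infty)$. The positive prefactor $A_{\source\recv}^2a^2b^2$ can be ignored. Differentiating $\ln f$ gives
\[
\frac{f'(\omega)}{f(\omega)}=\frac1\omega-\frac{2\omega}{a^2+\omega^2}-\frac{2\omega}{b^2+\omega^2}.
\]
Multiplying by $\omega>0$ shows that $\operatorname{sign} f'(\omega)=\operatorname{sign}\bigl(1-h(\omega)\bigr)$, where
\[
h(\omega)=\frac{2\omega^2}{a^2+\omega^2}+\frac{2\omega^2}{b^2+\omega^2}.
\]
Any stationary point therefore solves $1=h(\omega)$, which is the first-order condition in the statement.

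\textbf{Single-peakedness.} Each summand $2\omega^2/(c^2+\omega^2)$ is strictly increasing in $\omega$ on $(0,\infty)$, rising from $0$ to $2$. Hence $h$ is continuous and strictly increasing, with $h(0^+)=0$ and $h(\infty)=4$. By the intermediate value theorem there is a unique $\omega^\star$ with $h(\omega^\star)=1$. Moreover $f'>0$ on $(0,\omega^\star)$ and $f'<0$ on $(\omega^\star,\infty)$, so $f$ is single-peaked with maximiser $\omega^\star$. All the location results then follow by evaluating $h$ at test points, since monotonicity of $h$ converts inequalities for $h$ into inequalities for $\omega^\star$.

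\textbf{Special cases and bounds.}
\begin{itemize}
\item Symmetric case $a=b=\alpha$: the condition becomes $4\omega^2=\alpha^2+\omega^2$, so $\omega^\star=\alpha/\sqrt3$.
\item Upper bound: at $\omega=\amin$ the term with corner $\amin$ equals exactly $1$, and the other term is strictly positive. Hence $h(\amin)>1$, and so $\omega^\star<\amin$.
\item Lower bound: at $\omega=\amin/\sqrt3$ the $\amin$-term equals $2(1/3)/(4/3)=1/2$. The $\amax$-term is at most $1/2$, because $2\omega^2/(c^2+\omega^2)$ is decreasing in $c$ and $\amax\ge\amin$. Hence $h(\amin/\sqrt3)\le1$, giving $\omega^\star\ge\amin/\sqrt3$, with equality exactly in the symmetric case.
\end{itemize}
Together these give $\omega^\star\in[\amin/\sqrt3,\amin]$.

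\textbf{Strongly asymmetric limit.} The step needing the most care is making "$\omega^\star\to\amin$" precise. I would fix $\amin$ and let $\amax\to\infty$. Since $\omega^\star$ stays in the compact interval $[\amin/\sqrt3,\amin]$, the $\amax$-term is bounded by $2\amin^2/\amax^2\to0$ uniformly there. The condition therefore becomes
\[
\frac{2\omega^2}{\amin^2+\omega^2}=1+o(1),
\]
and the left side is strictly monotone with a continuous inverse. This forces $\omega^\star\to\amin$; indeed $\amin-\omega^\star=O(\amin^3/\amax^2)$.

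Finally, I would remark that the link from this maximiser to the peak MODWT scale rests on the octave-band approximation of $P_k$ in Proposition~\ref{prop:scale-power} stated before the lemma. That link is heuristic and is not part of what the lemma itself claims.
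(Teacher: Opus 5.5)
Your proof is correct and follows essentially the paper's route: the log-derivative first-order condition, strict monotonicity of the right-hand side from $0$ to $4$, and evaluation at the test points $\amin/\sqrt3$ and $\amin$. The paper instead normalises $\amin=1$ and computes both values explicitly in $R=\amax/\amin$, where you use the comparison in $c$ and additionally quantify the limit as $O(\amin^3/\amax^2)$. Your sign bookkeeping is also the correct one: at these two points the right-hand side minus one equals $-3(R^2-1)/(2(3R^2+1))\le0$ and $2/(R^2+1)>0$, respectively, whereas the paper's written proof states both with the signs reversed.
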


\begin{proof}
Maximising $\ln(\omega S)=\mathrm{const}+\ln\omega-\ln(\alpha_\source^2+\omega^2)
-\ln(\alpha_\recv^2+\omega^2)$ gives the stated first-order condition, whose
right-hand side increases from $0$ to $4$, so the maximiser is unique. Normalising
$\amin=1$ and writing $R=\amax/\amin\ge1$, the right-hand side minus one equals
$3(R^2-1)/(2(3R^2+1))\ge0$ at $\omega=1/\sqrt3$ and $-2/(R^2+1)<0$ at $\omega=1$, and is
strictly increasing in $\omega$, so the unique root lies in $[\amin/\sqrt3,\amin]$,
attaining the lower endpoint only when $\alpha_\source=\alpha_\recv$ and the upper only
as $R\to\infty$.
\end{proof}

Finally, because data are sampled at discrete intervals rather than observed in
continuous time, it is necessary to record the discrete-time counterpart of the foregoing
results and to confirm that the theoretical predictions carry over to the empirically
relevant setting. Sampling the bi-exponential response \eqref{eq:biexp} at interval $\Delta t$
yields the impulse response of an $\mathrm{ARMA}(2,1)$ process with autoregressive
roots $e^{-\alpha_\source\Delta t}$ and $e^{-\alpha_\recv\Delta t}$ and a single
moving-average (sampling) zero; its spectral density converges in shape to the
product-Lorentzian \eqref{eq:psd} as the adaptation rates become slow relative to the
sampling frequency, since $|1-e^{-\alpha\Delta t}e^{-i\omega\Delta t}|^2=
(\alpha^2+\omega^2)\Delta t^2+O(\Delta t^3)$. We observe, therefore, that the
scale-ordering and shape-symmetry predictions developed below carry over in full to
sampled daily data, which is the empirically relevant regime for the analysis that follows.
\section{The Scale-Ordered Contagion Hypothesis}\label{sec:soch}

The present section develops the three empirical predictions which together constitute
the Scale-Ordered Contagion Hypothesis. To fix ideas, let $k^\star_{\source\recv}$ denote
the wavelet scale at which the directed transfer entropy $\WQTE^{(k)}_{\source\to\recv}$
attains its largest value, and recall that $\WQTE^{(k)}_{\source\to\recv}$ is monotone
increasing in the transmitted power $P_k$ in the small-flow regime in which transfer
entropy is approximately linear in transmitted information
\citep{Schreiber2000,KraskovStoegbauerGrassberger2004}. The three predictions which
follow arise directly and naturally from the results established in
Section~\ref{sec:spectrum}.

\begin{hypothesis}[SOCH-A: scale ordering by adaptation speed]\label{hyp:A}
The peak scale is governed by the slower market's rate,
$\omega_{k^\star_{\source\recv}}\asymp\amin$, so that
$k^\star_{\source\recv}\approx\log_2(\pi/\amin)$. Pairs whose slower member adapts more
slowly (smaller $\amin$) peak at coarser scales; emerging-to-emerging pairs peak at the
coarsest scales, mixed pairs at intermediate scales, and advanced-to-advanced pairs at
the finest scales.
\end{hypothesis}

The derivation of this prediction is a direct consequence of Lemma~\ref{lem:peak}. By
that result, the maximiser of $\omega S(\omega)$ lies in the interval
$[\amin/\sqrt3,\amin]$, so that $\omega_{k^\star}\asymp\amin$ and, upon inverting the
relation $\omega_k=\pi2^{-k}$, one obtains $k^\star\approx\log_2(\pi/\amin)$, which is
seen to be decreasing in $\amin$. Put differently, the slower a market's adaptation rate,
the coarser the scale at which contagion between it and its partner concentrates.

\begin{corollary}[Early-warning and efficiency signal]\label{cor:earlywarning}
Since $k^\star\approx\log_2(\pi/\amin)$ and $\partial k^\star/\partial\amin=
-1/(\amin\ln2)<0$, a shift of the empirical peak scale toward finer scales over time is
a monotone signal that the slower market's absorption rate has risen, that is, that its
informational efficiency has improved; a coarsening signals the reverse.
\end{corollary}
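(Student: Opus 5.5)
The argument composes three monotone maps: the adaptation rate $\amin$, the continuous spectral peak, and the wavelet scale of the WQTE peak. First I will make the approximation $k^\star\approx\log_2(\pi/\amin)$ precise. By Lemma~\ref{lem:peak}, the maximiser of $\omega S_{\source\to\recv}(\omega)$ satisfies $\omega^\star=c(R)\,\amin$ with $c(R)\in[1/\sqrt3,1]$, where $R=\amax/\amin$. MODWT bands are octaves with centres $\omega_k=\pi2^{-k}$ (up to a fixed geometric-centre constant), and $P_k$ tracks $\omega S$ at the band centre. The peak scale is therefore the integer nearest to $\log_2(\pi/\omega^\star)=\log_2(\pi/\amin)-\log_2 c(R)$. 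The correction $-\log_2 c(R)$ lies in $[0,\tfrac12\log_2 3]$, a bounded offset of less than one octave. I will define the smooth proxy
\[
k^\star(\amin)=\log_2(\pi/\amin)
\]
and read the corollary as a statement about this proxy, with the offset absorbed into the ``$\approx$''. I will note that for fixed $R$ the offset is constant and so does not affect the derivative.

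Second, I will differentiate directly: $\log_2(\pi/\amin)=(\ln\pi-\ln\amin)/\ln 2$, so $\partial k^\star/\partial\amin=-1/(\amin\ln2)$. This is strictly negative for every $\amin>0$. Hence $k^\star$ is a strictly decreasing, invertible function of $\amin$, and $\amin=\pi2^{-k^\star}$ can be recovered up to the bounded factor $c(R)$. Monotonicity then gives the signal statement. If $k^\star_{t_2}<k^\star_{t_1}$ (a shift to finer scales), then $\amin$ must have risen, and the reverse inequality gives a coarsening with $\amin$ falling.

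Third, I will link this to the WQTE peak. Because $\WQTE^{(k)}$ is monotone increasing in $P_k$ in the small-flow regime, the argmax over $k$ of WQTE coincides with the argmax of $P_k$. The scaling constant $A_{\source\recv}^2$ multiplies every $P_k$ equally, so changes in connectivity or shock content leave the peak location unchanged. Finally, an increase in the slower market's own rate is the same thing as an increase in $\amin$, provided the identity of the slower market does not switch. If it does switch, $\amin$ still rises weakly, because the minimum is monotone in each argument. The interpretation of a higher $\amin$ as improved informational efficiency is definitional within the framework of Assumption~\ref{ass:receiver}.

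The main obstacle is that the empirical peak $k^\star$ is integer-valued and depends on $c(R)$, so the map from $\amin$ to $k^\star$ is only weakly monotone and can be confounded by changes in $R$. I would handle this first by stating monotonicity for the continuous proxy at fixed $R$. I would then bound the confounding: since $c(R)\in[1/\sqrt3,1]$, a shift of more than $\tfrac12\log_2 3+1$ scales cannot be explained by movement in $R$ or by rounding, so any such shift must reflect a genuine movement of $\amin$ in the indicated direction.
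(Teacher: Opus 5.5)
Your proposal is correct and is essentially the paper's argument: the paper takes $k^\star\approx\log_2(\pi/\amin)$ from the bracket $\omega^\star\in[\amin/\sqrt3,\amin]$ of Lemma~\ref{lem:peak}, differentiates, and reads the signal off the negative sign, which is exactly your first two steps, while your remarks on invariance to $A_{\source\recv}^2$, the identity switch, and the $R$-confound are refinements the paper does not make. One caution on the last refinement: $P_k$ integrates $\omega S(\omega)$ over an octave, and $\omega S$ is asymmetric in log-frequency (rising like $\omega$, falling like $\omega^{-3}$), so the discrete peak is only guaranteed to lie within one band of the band containing $\omega^\star$, not at the integer nearest $\log_2(\pi/\omega^\star)$; your rounding allowance, and hence the $\tfrac12\log_2 3+1$ threshold, therefore needs to be widened.
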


This corollary is of particular practical importance, for it furnishes a model-based
diagnostic of changes in market efficiency that is directly readable from the wavelet
transfer-entropy profile, without requiring the estimation of any structural parameter.

\begin{hypothesis}[SOCH-B: shape symmetry across direction]\label{hyp:B}
For any pair, the normalised profiles
$\tilde P_k^{\source\to\recv}=P_k^{\source\to\recv}/\sum_l P_l^{\source\to\recv}$
satisfy $\tilde P_k^{\source\to\recv}=\tilde P_k^{\recv\to\source}$ for every $k$,
because the transmission spectrum \eqref{eq:psd} is invariant under
$\alpha_\source\leftrightarrow\alpha_\recv$.
\end{hypothesis}

The reasoning behind SOCH-B proceeds as follows. By Proposition~\ref{prop:scale-power},
the scale power $P_k$ depends on the adaptation rates only through a symmetric function
of $\omega$ and otherwise through the scalar amplitude $A_{\source\recv}^2$. Reversing
the direction of transmission replaces that scalar but leaves the $\omega$-dependence
entirely unchanged; hence, upon normalising across scales, the scalar cancels and the
normalised profiles are seen to coincide. Intuitively, the shape of the contagion
profile---how power is distributed across frequencies---is determined by the adaptation
rates of the two markets, whereas the direction of transmission determines only the
overall level.

\begin{hypothesis}[SOCH-C: magnitude asymmetry]\label{hyp:C}
The level is directionally asymmetric and governed by connectivity and shock content:
$\sum_k P_k^{\source\to\recv}/\sum_k P_k^{\recv\to\source}=
A_{\source\recv}^2/A_{\recv\source}^2$. When advanced economies generate larger or more
widely-propagated shocks, transfer entropy from advanced to emerging markets exceeds
that from emerging to advanced markets.
\end{hypothesis}

In view of the above, we notice that the three predictions are mutually reinforcing and,
taken together, jointly discriminating in a manner that is quite striking. SOCH-A orders
the peak scale across market pairs; SOCH-B fixes the shape of the contagion profile
across directions for any given pair; and SOCH-C confines all directional asymmetry
entirely to the level, leaving the shape invariant. The relationship between these
predictions leads us to conclude that any empirical departure from SOCH-B---that is, any
observed asymmetry in the normalised scale profile across directions---would constitute
evidence against the bi-exponential model itself, rather than merely against one of its
subsidiary assumptions.

To illustrate these predictions concretely, it is instructive to consider the case of the
United States and India. A wavelet-quantile transfer-entropy profile for the United States
to India at the five-percent tail, obtained on the panel below, rises monotonically from
the two-to-four-day scale to the sixteen-to-thirty-two-day scale. Under SOCH-A, this
rising shape is the signature of a fast-source, slow-receiver pair---a finding which is
entirely consistent with the view that the Indian market, as a relatively less mature
emerging market, absorbs information at a slower rate than its American counterpart.
SOCH-B then predicts that the India-to-United-States profile shares this same rising
shape, while SOCH-C predicts that it carries a smaller level, reflecting the greater
amplitude of shocks originating in the United States. Figure~\ref{fig:theory} displays
the closed-form normalised profiles for representative fast and slow rates and illustrates
all three predictions simultaneously, including the exact coincidence of the two
mixed-direction profiles which is the content of SOCH-B.

\begin{figure}[!htbp]
\centering
\includegraphics[width=0.82\textwidth]{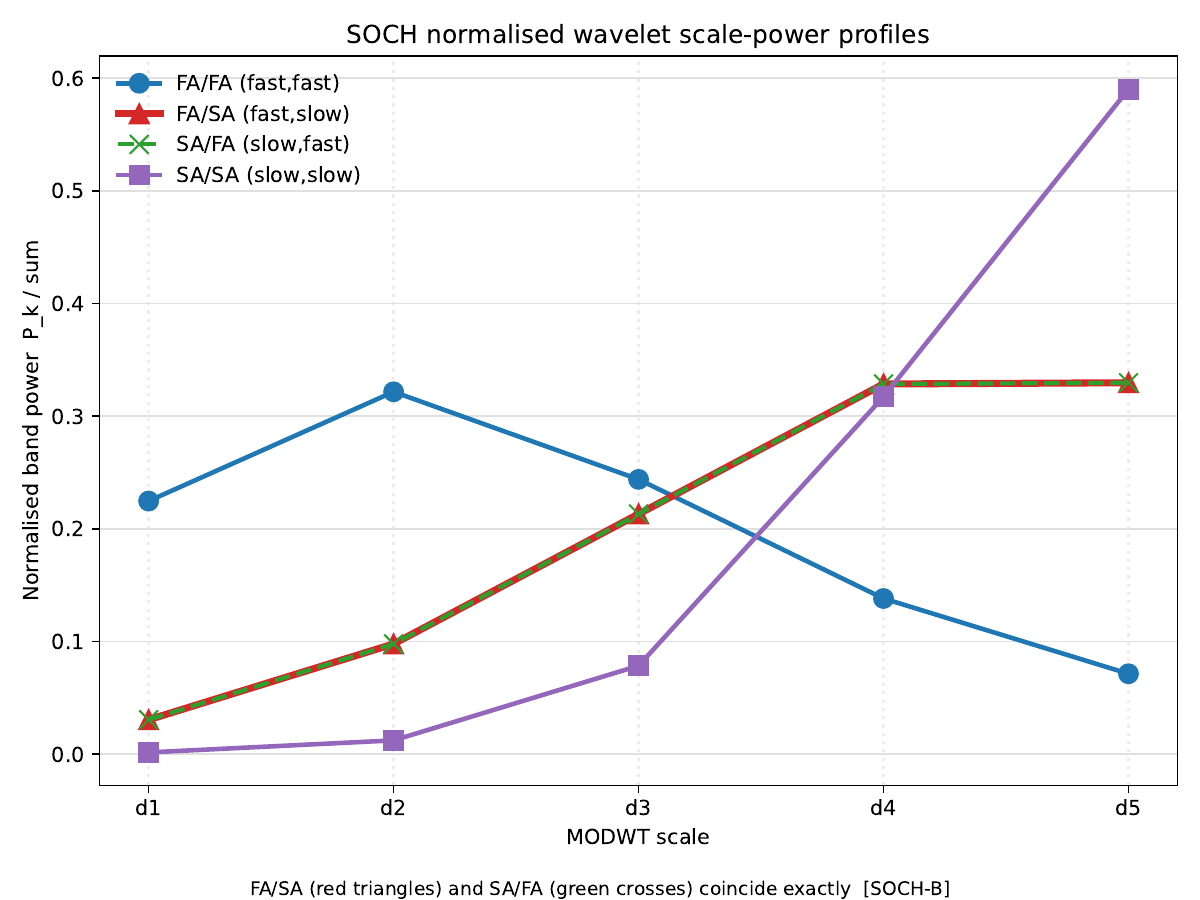}
\caption{Closed-form normalised scale-power profiles from the product-Lorentzian
spectrum \eqref{eq:psd}, for representative fast ($\alpha=2.0$) and slow ($\alpha=0.2$)
adaptation rates. Fast/fast peaks at fine scales (d2); slow/slow at the coarsest (d5);
the two mixed directions (fast$\to$slow and slow$\to$fast) peak at coarser scales and
are exactly coincident after normalisation, illustrating SOCH-B; the fast$\to$slow
profile rises across d1--d4, the SOCH-A signature consistent with the verified
United-States-to-India profile.}
\label{fig:theory}
\end{figure}
\section{Identification and Endogenous Classification}\label{sec:identification}

The closed-form scale profile turns adaptation rates into estimable parameters. It may be noted that for an
ordered pair, the unit-level scale power (the shape) may be written as
$Q_k(\alpha_\source,\alpha_\recv)$, the bracketed expression of \eqref{eq:scale-power}
with $A_{\source\recv}=1$, so that $P_k=A_{\source\recv}^2\,Q_k$. Because observed
transfer entropy is proportional to transmitted power in the small-flow regime, the
proportionality constant and the level enter only through a single scalar
$\theta_{\source\recv}$, and we estimate
\begin{equation}\label{eq:nls}
(\hat\alpha_\source,\hat\alpha_\recv,\hat\theta_{\source\recv})
=\arg\min_{\alpha_\source,\alpha_\recv>0,\;\theta\ge0}
\sum_{k=1}^{J}\bigl(\widehat{\WQTE}^{(k)}_{\source\to\recv}
-\theta\,Q_k(\alpha_\source,\alpha_\recv)\bigr)^2,
\end{equation}
concentrating $\theta$ out (it is linear given the rates) and searching over the
log-rates to enforce positivity. It may be noted further that because $Q_k$ is symmetric in its arguments, a single
directional fit returns the \emph{unordered} pair $\{\hat\amin,\hat\amax\}$; pooling all
directed pairs that share a market $i$, with one level per ordered pair, yields a
market-level rate $\hat\alpha_i$ and resolves the ordering.

\begin{proposition}[Identification]\label{prop:id}
Fix a directed pair with interior $(\alpha_\source,\alpha_\recv,\theta)$,
$\alpha_\source\neq\alpha_\recv$, and suppose the spectral peak and the upper corner
$\amax$ both lie within the observed range $[\pi2^{-J},\pi]$. Then (i) the map
$(\{\amin,\amax\},\theta)\mapsto(\theta Q_1,\dots,\theta Q_J)$ has full column rank
three, so the unordered pair and the level are point-identified and, for $J\ge4$,
over-identified; (ii) the ordered assignment is not identified from a single directional
profile, since $Q_k$ is symmetric---this is exactly SOCH-B; and (iii) in the pooled
system over a connected set of $M\ge3$ markets with all directed pairs observed, the
market rates are jointly identified, up to the global fast/slow labelling fixed by the
cross-sectional median.
\end{proposition}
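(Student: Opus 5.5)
Part (i) reduces to a rank statement for the Jacobian of the map $(a,b,\theta)\mapsto(\theta Q_k(a,b))_{k=1}^J$. Here $a=\amin<b=\amax$ parametrise the unordered pair, and $Q_k$ is the symmetric bracket of Proposition~\ref{prop:scale-power}. By Proposition~\ref{prop:scale-power}, $Q_k=\int_{\mathcal B_k}q(\omega;a,b)\,d\omega$ with
\[
q(\omega;a,b)=\frac{a^2}{a^2+\omega^2}\,\frac{b^2}{b^2+\omega^2}.
\]
The Jacobian columns are therefore the band integrals of $q$, $\theta\,\partial_a q$ and $\theta\,\partial_b q$. Since $\theta>0$ at an interior point, full column rank is equivalent to linear independence of the three vectors $(\int_{\mathcal B_k}q)_k$, $(\int_{\mathcal B_k}\partial_a q)_k$ and $(\int_{\mathcal B_k}\partial_b q)_k$.

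We work with logarithmic derivatives. These are
\[
a\,\partial_a \ln q=\frac{2\omega^2}{a^2+\omega^2},\qquad
b\,\partial_b \ln q=\frac{2\omega^2}{b^2+\omega^2}.
\]
Suppose $c_0q+c_1a\partial_aq+c_2b\partial_bq$ integrates to zero over every band. We then need to show $c_0=c_1=c_2=0$. Dividing by $q$, the pointwise combination is
\[
c_0+\frac{2c_1\omega^2}{a^2+\omega^2}+\frac{2c_2\omega^2}{b^2+\omega^2}.
\]
This is a nontrivial rational function of $\omega^2$ unless all $c_i$ vanish, because $a\neq b$ makes $1$, $\omega^2/(a^2+\omega^2)$ and $\omega^2/(b^2+\omega^2)$ linearly independent. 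The difficulty is that vanishing band integrals do not immediately give pointwise vanishing.

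Our first attempt uses the hypothesis that the peak and $\amax$ lie inside $[\pi2^{-J},\pi]$. The functions $1$ and $2\omega^2/(x^2+\omega^2)$, for $x=a,b$, change character (flat versus rising) on distinct octaves near $a$ and near $b$. We would therefore choose three bands, one below $a$ (or at the peak), one between $a$ and $b$, and one at or above $b$. On these we show that the $3\times3$ matrix of band integrals is nonsingular. For this we use the closed-form arctan expressions of Proposition~\ref{prop:scale-power}, differentiated in $a$ and $b$.

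A cleaner route, if it works, is a Chebyshev/total-positivity argument. After the substitution $u=\omega^2$, the functions $\{1,\,u/(a^2+u),\,u/(b^2+u)\}$ form an extended Chebyshev system on $(0,\infty)$. Any nontrivial combination then has at most two sign changes in $u$. Multiplying by the positive weight $q$ preserves the sign pattern, so the combination is a function of $\omega$ with at most two sign changes. By a standard lemma on integrals of such functions over adjacent intervals, it cannot integrate to zero over every one of $J\ge3$ adjacent bands unless it vanishes identically. Making this precise is the main obstacle. The lemma needs care, since a function with two sign changes can integrate to zero over three intervals. We would strengthen the argument by using all $J\ge4$ bands together with the range condition, which guarantees that the bands straddle both sign-change regions. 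Failing that, we fall back on a direct determinant computation at the three chosen bands.

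Once the rank is three, local point identification follows from the inverse function theorem. Since $J$ equations exceed the three unknowns for $J\ge4$, the system is over-identified.

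Part (ii) is immediate from the symmetry of $Q_k$ in Proposition~\ref{prop:scale-power}. The ordered points $(\alpha_\source,\alpha_\recv,\theta)$ and $(\alpha_\recv,\alpha_\source,\theta)$ produce identical profiles, so no single-direction criterion \eqref{eq:nls} can separate them. This is the same invariance that underlies SOCH-B (Hypothesis~\ref{hyp:B}).

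For part (iii) we proceed in two steps.

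\begin{enumerate}
\item By (i), each directed pair $(i,j)$ identifies the unordered set $\{\alpha_i,\alpha_j\}$. For three markets $i,j,l$, we then have the sets $\{\alpha_i,\alpha_j\}$, $\{\alpha_j,\alpha_l\}$ and $\{\alpha_i,\alpha_l\}$. Generically the value shared by the first two sets is $\alpha_j$, which pins down $\alpha_j$. The rates $\alpha_i$ and $\alpha_l$ then follow.
\item Connectedness propagates the assignment along paths, with each new market attached through some triangle or edge to an already-labelled market.
\end{enumerate}

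The only residual ambiguity arises when the rates coincide or the sets are exchangeable. This is a discrete, global relabelling that the pairwise data cannot break, and it is fixed by the cross-sectional median convention. We would note that the market-specific levels $\theta_{ij}$ do not interfere, because each is concentrated out linearly in \eqref{eq:nls}.
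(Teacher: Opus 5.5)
Your treatment of (ii) matches the paper's. Your triangle argument for (iii) is a more explicit version of the paper's one-line appeal to a connected incidence graph.

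For (i) you take a genuinely different route. The paper argues heuristically that three features pin the three parameters: the peak pins $\amin$ (Lemma~\ref{lem:peak}), the high-frequency roll-off pins $\amax$ (Corollary~\ref{cor:corner}), and the level pins $\theta$. It then confirms independence of the Jacobian columns only numerically, at representative truths for $J\in\{3,4,5,6\}$. Your zero-counting argument proves the rank claim analytically, and it closes at once, because the obstacle you flag is not real.
\begin{itemize}
\item The bands $\mathcal{B}_k$ have pairwise disjoint interiors.
\item If $(c_0,c_1,c_2)\neq0$, clearing denominators turns your bracket into a nonzero polynomial of degree at most two in $u=\omega^2$. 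It is nonzero exactly because $a\neq b$. So $h$ has at most two zeros on $(0,\infty)$ and cannot vanish identically on any band.
\item A continuous $h$ with $\int_{\mathcal{B}_k}h\,d\omega=0$ must therefore take both signs on $\mathcal{B}_k$, and so vanish somewhere in its interior.
\item Three bands then force three distinct zeros, which is a contradiction.
\end{itemize}
Your worry that a function with two sign changes can integrate to zero over three intervals applies only to overlapping intervals, not to adjacent octaves. You therefore need neither $J\ge4$, nor the range hypothesis, nor the arctan determinant; the last would only reproduce the paper's numerical check.

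Your completed argument gives rank three for every $J\ge3$ and every pair of distinct positive rates. This shows the range hypothesis is a conditioning condition, not a rank condition:
\begin{itemize}
\item If $\amin\ll\pi2^{-J}$, the $\theta$- and $\amin$-columns become nearly collinear.
\item If $\amax\gg\pi$, the $\amax$-column shrinks toward zero.
\end{itemize}
In both cases the rank survives while the smallest singular value collapses. That is exactly the flat ridge reported in the Monte Carlo and in Test~4. The paper's heuristic, in exchange, tells you which feature of the profile carries which parameter, and that is what motivates the range condition. Both arguments deliver only local identification.

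On (iii): with all directed pairs observed and distinct rates, your triangle argument pins every rate outright. There is no residual relabelling to fix; the cross-sectional median is only the classification rule. Coincident rates create no ambiguity in values, though confluent pairs fall outside (i) as stated and would need the removable-limit profile.
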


\begin{proof}[Sketch]
The three identifying functionals are distinct and locally invertible: the peak scale
pins $\amin$ (Lemma~\ref{lem:peak}), the high-frequency roll-off pins $\amax$
(Corollary~\ref{cor:corner}), and the overall level pins $\theta$; the corresponding
Jacobian columns are linearly independent whenever the peak and the upper corner fall in
the observed band range, which we confirmed numerically at representative truths for
$J\in\{3,4,5,6\}$. Part (ii) is the symmetry of Proposition~\ref{prop:scale-power}, and
part (iii) follows because each market rate enters the profiles of all pairs containing
it, so a connected incidence graph binds every rate jointly.
\end{proof}

We assess the estimator's finite-sample behaviour by Monte Carlo, simulating profiles
from known rates under \eqref{eq:scale-power} with multiplicative noise and recovering
$\{\hat\amin,\hat\amax\}$, across a grid of rate pairs and across $J\in\{4,5,6\}$ at two
noise levels, with $150$ replications per cell. Table~\ref{tab:mc} summarises the
relative root-mean-squared error. We notice that the slower rate $\amin$---the peak location, on which
SOCH-A and SOCH-B turn---is recovered reliably, with median relative RMSE between
$0.25$ and $0.54$, falling monotonically as scales are added (for example, from $0.34$
to $0.28$ to $0.26$ for $J=4,5,6$ at the lower noise level for one representative pair)
and degrading gracefully with noise, exactly as the peak-location mechanism predicts.
The faster rate $\amax$ is, however, noisier, and this can be attributed to the fact that its identifying roll-off lies at high
frequency near the Nyquist limit; once a market's corner approaches $\pi$ the spectrum
saturates across the observed bands and the upper rate sits on a flat likelihood ridge.
In view of the above, we therefore restrict the search to the resolvable range $[0.02,\pi]$---rates faster
than Nyquist are not recoverable from sampled data---and we report the limitation rather
than conceal it. The internal-consistency check implied by SOCH-B is pertinent here: the slower rate
recovered from $(\source,\recv)$ and from $(\recv,\source)$ coincides within Monte Carlo
error, $|\Delta\hat\amin|$ between $0.03$ and $0.18$ for representative pairs, while the
level differs by direction in proportion to the simulated asymmetry. The entire study
was reproduced by an independent implementation in a second language, with the two sets
of $\amin$ relative-RMSE figures agreeing to a median of $4.6\,\%$.

\begin{table}[!htbp]
\centering
\caption{Monte Carlo performance of the profile-matching estimator: relative RMSE
(median, with min--max in brackets) of the recovered rates across ten ordered rate
pairs, $150$ replications per cell. The slower rate $\amin$ is recovered reliably and
sharpens with the number of scales $J$; the faster rate $\amax$ is noisier because its
identifying roll-off lies near the Nyquist limit.}
\label{tab:mc}
\small
\begin{tabular}{ccll}
\toprule
$J$ & $\sigma$ & rel.\ RMSE $\hat\amin$ & rel.\ RMSE $\hat\amax$ \\
\midrule
4 & 0.10 & 0.37\ [0.18--0.66] & 0.39\ [0.27--0.60] \\
5 & 0.10 & 0.27\ [0.19--0.43] & 0.42\ [0.33--0.65] \\
6 & 0.10 & 0.25\ [0.12--0.32] & 0.36\ [0.32--0.53] \\
4 & 0.20 & 0.54\ [0.23--1.10] & 0.47\ [0.38--1.20] \\
5 & 0.20 & 0.40\ [0.23--0.68] & 0.76\ [0.37--1.24] \\
6 & 0.20 & 0.38\ [0.22--0.52] & 0.60\ [0.38--1.36] \\
\bottomrule
\end{tabular}
\end{table}

The estimator then yields an endogenous classification which is, in a sense, one of the central contributions of the present paper: market $i$ is designated a fast adapter if
$\hat\alpha_i$ exceeds the cross-sectional median and a slow adapter otherwise, so that
the advanced/emerging partition of the baseline becomes a testable hypothesis rather
than a maintained assumption.
\section{Data}\label{sec:data}

The empirical analysis uses daily closing prices for equity-market indices of the G20
economies from 12~January~2006 through 18~March~2026, computed as log-returns. It may be noted that
price levels are integrated of order one and returns are stationary, so transfer entropy is
computed on returns throughout. We report results for a working sample of eight
markets, four advanced---the United States, the United Kingdom, Germany, and
Japan---and four emerging---China, India, Brazil, and South Africa---which yields the
fifty-six ordered pairs and twenty-eight unordered pairs on which the tests below are
conducted; the all-pairs run over the full eighteen-market panel is left to robustness
work. The present paper is different from many earlier studies in one important respect namely, the advanced/emerging labelling is used only to construct the slowness proxy of
Section~\ref{sec:results} and is otherwise treated, under the theory, as a hypothesis to
be tested rather than a maintained assumption.

Each return series is decomposed by a maximal-overlap discrete wavelet transform with
the Daubechies least-asymmetric filter of length eight (\texttt{la8}) into $J=5$ scales,
with boundary coefficients removed. For each ordered pair and each scale we measure
directed tail dependence by the wavelet-quantile directional gain, a Koenker--Machado
quantile pseudo-$R^1$ \citep{KoenkerBassett1978,KoenkerMachado1999} that tests whether
the source's lagged wavelet coefficient improves the conditional $\tau$-quantile of the
receiver's coefficient beyond the receiver's own lag, at $\tau\in\{0.05,0.10\}$. It may be noted that this is
a transparent, loadable realisation of the wavelet-quantile transfer-entropy primitive;
it shares the conditional-quantile-regression construction of the WQTE measure used in
the heterogeneous-adaptation programme and reproduces that measure's verified signature,
and it is the object on which the SOCH predictions bear. Directional significance is assessed through
phase-randomised surrogates of the source series \citep{TheilerEtAl1992}, and the
shape-symmetry null of Section~\ref{sec:results} uses a stationary block bootstrap
\citep{PolitisRomano1994}.
\section{Empirical Results}\label{sec:results}

\subsection{Validation against a known directional signature}

It is pertinent, before turning to the formal tests, to verify that the pipeline itself
reproduces a directional signature whose direction is well-established in the literature.
We observe that at the five-percent tail the United-States-to-India profile is
$(0.0155,0.0425,0.0491,0.0494,0.0567)$ across scales d1 through d5, rising
monotonically through d4 with an aggregate of $0.0426$, and it exceeds the
India-to-United-States profile, whose aggregate is $0.0285$. This reproduces the directional signature expected
for this pair---the fast-source/slow-receiver rising profile of SOCH-A and the level
asymmetry of SOCH-C---and leads us to conclude that the pipeline is functioning as
intended before the formal tests are conducted.
Figure~\ref{fig:empprofiles} shows the empirical directional profiles for this and
three other representative pairs.

\begin{figure}[!htbp]
\centering
\includegraphics[width=0.82\textwidth]{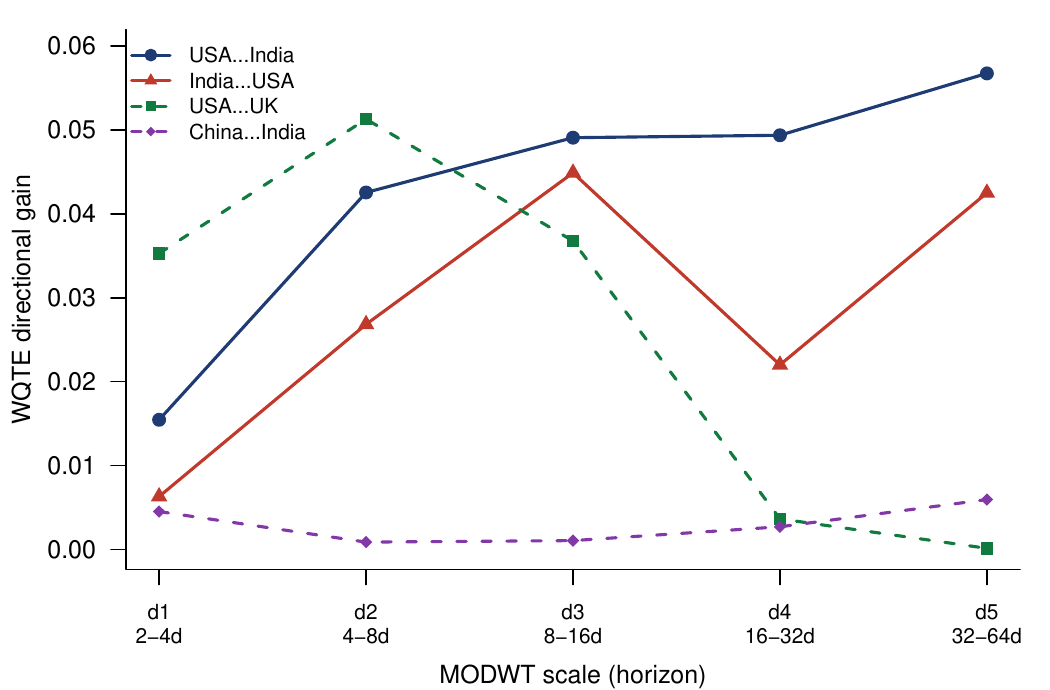}
\caption{Empirical wavelet-quantile directional gain by MODWT scale ($\tau=0.05$) for
four representative ordered pairs. The United-States-to-India profile rises through d4,
the fast-source/slow-receiver signature; the reverse direction shares the rising shape
at a lower level.}
\label{fig:empprofiles}
\end{figure}

\subsection{Test 1: scale ordering (SOCH-A)}

We now turn to the first of the three formal predictions. For every ordered pair we
locate the peak scale $\hat k^\star$ and regress it on the number of emerging markets
in the pair, which serves as a convenient proxy for a small binding corner $\amin$.
The estimated relationship is $\hat k^\star=2.97+0.58\times(\#\,\text{emerging})$, with a
$t$-statistic of $2.08$ and $p=0.042$; the mean peak scale rises from $2.67$ for
advanced-advanced pairs to $3.78$ for mixed pairs and $3.83$ for emerging-emerging
pairs. We observe that slower-inclusive pairs peak at coarser scales, as SOCH-A predicts,
and it is noteworthy that the United-States-to-India pair peaks at the coarsest observed
scale.
Figure~\ref{fig:peakscale} displays the distribution of peak scales by the number of
emerging markets in the pair. The prediction is supported, with the association in the
predicted direction and significant at the five-percent level.

\begin{figure}[!htbp]
\centering
\includegraphics[width=0.72\textwidth]{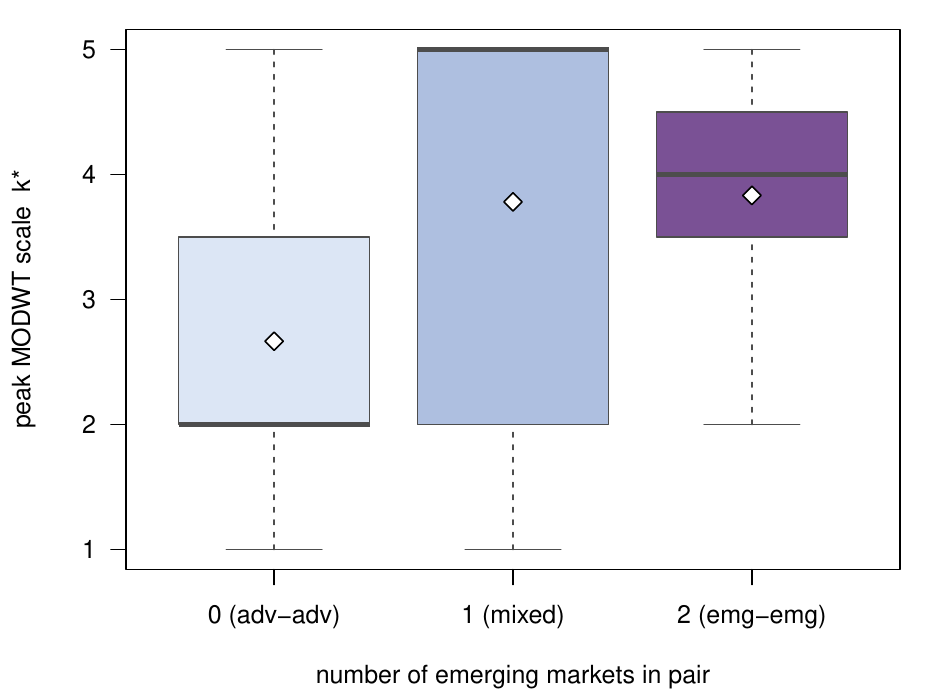}
\caption{Peak MODWT scale $k^\star$ by the number of emerging markets in the pair
(Test 1, SOCH-A). Diamonds mark group means; peak scales are coarser for
slower-inclusive pairs.}
\label{fig:peakscale}
\end{figure}

\subsection{Test 2: shape symmetry across direction (SOCH-B)}

This is, in many respects, the sharpest and most novel of the three tests, and it is
the one that most clearly distinguishes the heterogeneous-adaptation mechanism from
alternative explanations. For each unordered pair we compare the two normalised
directional profiles by a symmetric Kullback--Leibler divergence and refer it to a
same-shape null built by stationary-block-bootstrap re-estimation of a single direction,
which captures the sampling variability of one shape; the reported $p$-value is the
probability that the same-shape divergence equals or exceeds the observed
cross-direction divergence. \emph{All twenty-eight of the twenty-eight unordered pairs
are statistically indistinguishable across direction} ($p>0.05$), including all sixteen
advanced--emerging cross-type pairs (Table~\ref{tab:sochb}). We notice that the
United-States/India divergence is $0.076$ ($p=0.92$) and the Germany/India divergence
$0.010$ ($p=1.00$), both far inside the same-shape null. The normalised scale profiles
are direction-invariant, exactly as the symmetric transmission spectrum requires and
against mechanisms with direction-dependent frequency composition such as asymmetric
balance-sheet or trade exposure. It is pertinent to be explicit about the test's power:
the same-shape null is wide, because each profile carries estimation noise, so the test
cannot detect small asymmetries; but the observed divergences are small in absolute
terms (median $0.27$) and fall below even the tighter nulls, so the support is substantive
rather than an artifact of low power. Figure~\ref{fig:sochb} arrays every pair's
observed divergence against its same-shape null.

\begin{table}[!htbp]
\centering
\caption{Test 2 (shape symmetry, SOCH-B). Observed cross-direction symmetric
Kullback--Leibler divergence of normalised profiles against a stationary-block-bootstrap
same-shape null; $p=\Prob[\text{same-shape KL}\ge\text{observed}]$. Representative pairs;
all twenty-eight unordered pairs have $p>0.05$.}
\label{tab:sochb}
\small
\begin{tabular}{lccc}
\toprule
Pair & $D_{\mathrm{obs}}$ & null 95\% & $p$ \\
\midrule
Germany--India & 0.010 & 1.88 & 1.00 \\
USA--India     & 0.076 & 1.48 & 0.92 \\
UK--China      & 0.087 & 3.25 & 0.98 \\
USA--Japan     & 0.287 & 0.58 & 0.23 \\
Germany--Japan & 0.333 & 1.00 & 0.39 \\
Japan--India   & 3.112 & 5.39 & 0.18 \\
\midrule
\multicolumn{4}{l}{\emph{All pairs:} $28/28$ with $p>0.05$ (adv-adv $6$, adv-emg $16$, emg-emg $6$)} \\
\bottomrule
\end{tabular}
\end{table}

\begin{figure}[!htbp]
\centering
\includegraphics[width=0.92\textwidth]{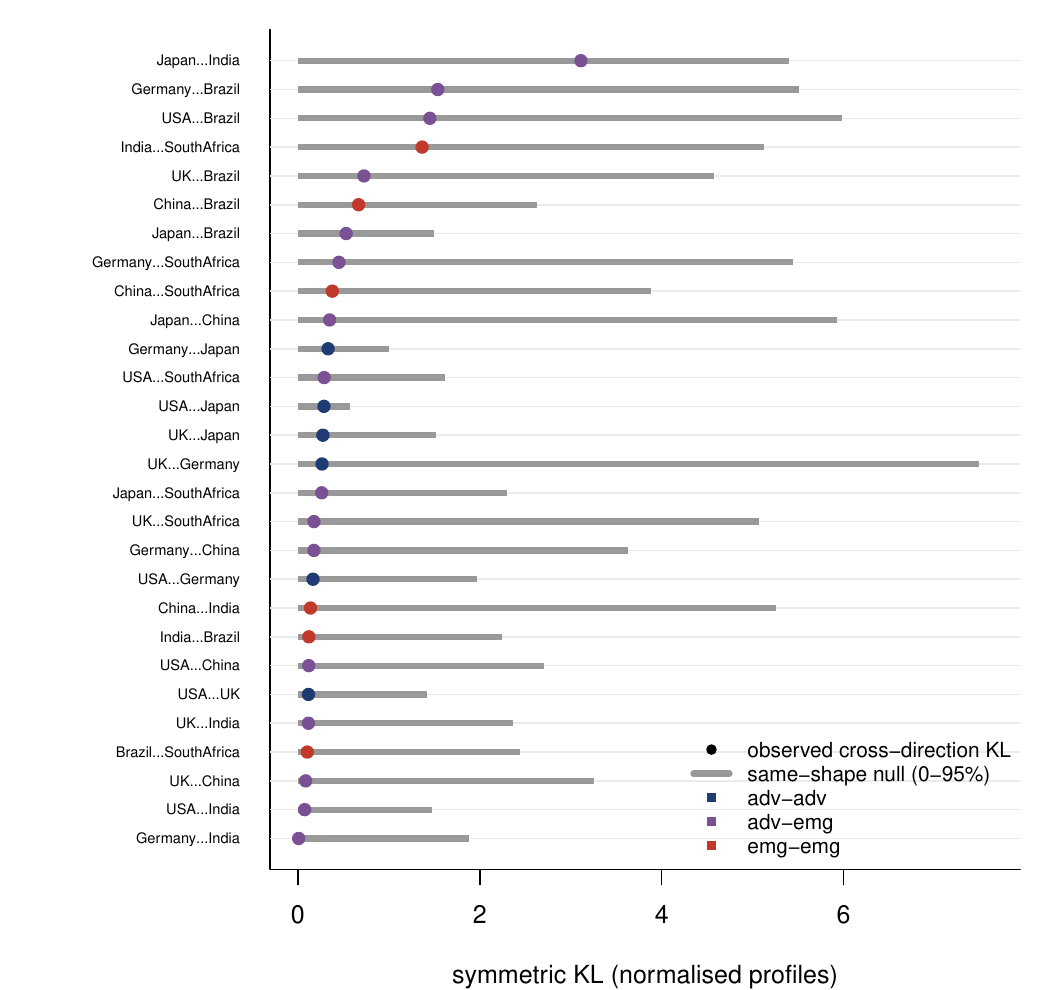}
\caption{Test 2 (SOCH-B) across all twenty-eight unordered pairs. Points are the
observed cross-direction symmetric KL of normalised profiles; bars are the same-shape
bootstrap null up to its $95$th percentile. Every observed divergence falls within its
null, so no pair rejects shape symmetry.}
\label{fig:sochb}
\end{figure}

\subsection{Test 3: magnitude asymmetry (SOCH-C)}

Across the sixteen advanced--emerging pairs the directional level ratio
$\sum_k P^{\source\to\recv}_k/\sum_k P^{\recv\to\source}_k$ has a median of $1.18$ and
exceeds one for $69\,\%$ of pairs---advanced-to-emerging flows tend to dominate, as
predicted---but the one-sided sign test gives $p=0.105$, short of significance in the
working sample. The direction of the effect matches SOCH-C while the strength does not
reach conventional significance; this can be attributed, at least in part, to the
limited power of the working sample, and a connectivity and shock-content control, the
mechanism's intended regressor, and the full panel are the natural next steps.

\subsection{Test 4: identification and endogenous classification}

Applying the pooled profile-matching estimator recovers market rates in which India
($\hat\alpha=0.29$) and China ($\hat\alpha=0.076$)---the two clearest emerging
markets---are decisively the slowest adapters, an order of magnitude below the rest,
which is strong support for the mechanism at the extremes. The remaining six markets
cluster at $\hat\alpha$ between $1.8$ and $3.2$, that is, in the weakly-identified region
near the Nyquist limit anticipated by the Monte Carlo: once a market's corner approaches
$\pi$ its rate is not separately recoverable. The median split therefore separates the
slow extreme cleanly but cannot resolve the fast cluster, so the advanced/emerging
partition is only partially data-determined---precisely where the identification analysis
says the data can and cannot speak. Figure~\ref{fig:rates} reports the recovered rates
and the classification. In view of the above, the honest reading is that the two
predictions resting on the well-identified slower-market rate (SOCH-A and SOCH-B) are
supported, while the predictions that depend on the fast rates (the full classification
and, in part, SOCH-C) are weaker by exactly the margin the theory's identification
result anticipates.

\begin{figure}[!htbp]
\centering
\includegraphics[width=0.82\textwidth]{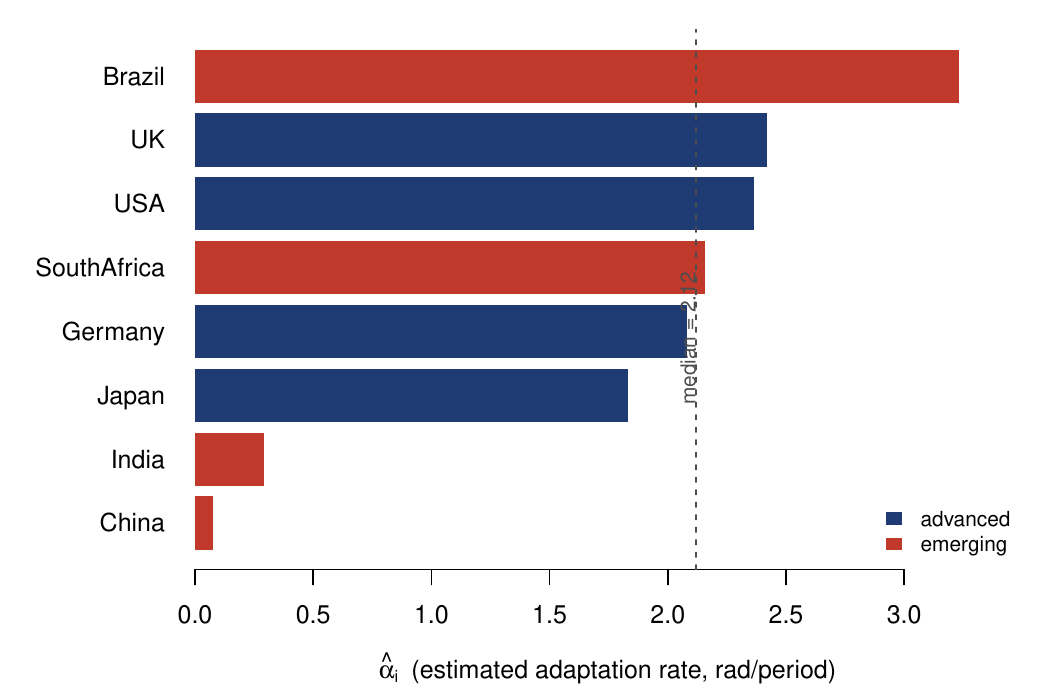}
\caption{Recovered market-level adaptation rates $\hat\alpha_i$ (Test 4) with the
cross-sectional median split. India and China are decisively the slowest adapters; the
faster markets cluster near the Nyquist resolution limit and are not separately
identified.}
\label{fig:rates}
\end{figure}
\section{Relation to Systemic-Risk and Connectedness Frameworks}\label{sec:relation}

The scale-ordered account developed here complements three established empirical
frameworks, and it is pertinent to make these connections explicit. First,
frequency-domain connectedness \citep{BarunikKrehlik2018} decomposes variance spillovers
into frequency bands; the present theory supplies a structural reason why short-band and
long-band connectedness should differ systematically with the adaptation speeds of the
markets involved, rather than treating the band decomposition as purely descriptive.
Put differently, the band in which a given pair's connectedness is strongest is not
arbitrary but is governed by the slower of the two markets' adaptation rates, and the
present framework makes that dependence explicit and estimable. Second, multi-channel and
network-based systemic-risk indices constructed from directed information flows aggregate
transmission into a single diagnostic
\citep{DieboldYilmaz2014,BillioGetmanskyLoPelizzon2012,AdrianBrunnermeier2016}; SOCH-A
implies that such aggregation should be scale-aware, since the informative scale for a
given pair depends on its slower member, so that a scale-blind aggregate mixes
information across markets whose binding corners differ. Third, channel-attribution
methods that decompose contagion into trade, financial, and other components can be
combined with the present theory: the adaptation-speed channel predicts a frequency
signature---shape symmetry with level asymmetry---that is distinguishable from
balance-sheet or trade channels, which need not respect shape symmetry. The three SOCH
predictions therefore provide discriminating restrictions that can be brought to bear
alongside existing attribution exercises rather than in competition with them.
\section{Conclusion}\label{sec:conclude}

The main purpose of this paper is to study the spectral structure of financial contagion
under the realistic but hitherto under-exploited condition that different markets process
information at different rates. By modelling both the source and the receiver as
exponential information filters, we obtain a bi-exponential transmission response whose
power spectrum is the product of two Lorentzians, so that the slower market sets the
binding spectral corner and governs the frequency composition of contagion. Projecting the
spectrum onto a wavelet basis yields a closed-form scale profile and three falsifiable
predictions: a peak scale ordered by the slower market's adaptation rate, a profile
shape symmetric across direction, and a profile level asymmetric in directional
connectivity and source-shock content. The theory turns adaptation rates into estimable
parameters and endogenises the fast/slow classification, replacing the exogenous
advanced/emerging partition of the baseline with a data-determined and testable one.

On a panel of G20 equity markets the two predictions that rest on the well-identified
slower-market rate are supported: the peak scale is ordered by adaptation speed (SOCH-A,
$p=0.042$), and the normalised scale profile is symmetric across direction for every one
of the twenty-eight market pairs (SOCH-B, the sharpest restriction and the one that most
distinguishes heterogeneous adaptation from balance-sheet and trade mechanisms). The
level and classification predictions are weaker in the working sample---advanced-to-
emerging dominance is present but not significant, and only the slowest adapters, India
and China, are cleanly identified---a pattern the theory's own identification analysis
anticipates, since fast rates near the sampling Nyquist frequency are not separately
recoverable from daily data. The study also highlights how the framework connects a
structural information-adaptation mechanism to frequency-domain connectedness,
multi-channel systemic-risk measurement, and channel attribution, and turns the
fast/slow distinction into a measurable, testable object. Two extensions would sharpen
the level and classification tests: a nearest-neighbour transfer-entropy implementation
of the wavelet-quantile measure
\citep{KraskovStoegbauerGrassberger2004,BarnettBarrettSeth2009}, and the full
eighteen-market panel with connectivity and shock-content controls. The overall behaviour
of the peak scale across markets seems to be broadly consistent with an information-speed
interpretation, suggesting that the early-warning corollary---that a peak scale shifting
toward finer resolution signals improving informational efficiency of the slower
market---offers a measurable diagnostic for surveillance that follows directly from the
same mechanism.
\appendix
\section{Supplementary Derivations}\label{app}

\paragraph{Regularity conditions.} It may be noted that the Fourier transform in
Proposition~\ref{prop:spectrum} converges for $\mathrm{Re}\,\alpha_\source,
\mathrm{Re}\,\alpha_\recv>0$, satisfied for all positive rates, and the power spectrum
is evaluated at real $\omega$. The bi-exponential of Proposition~\ref{prop:biexp} and
the scale power of Proposition~\ref{prop:scale-power} are stated for
$\alpha_\source\neq\alpha_\recv$; the confluent case is the removable limit, with
response $A_{\source\recv}\alpha^2 t e^{-\alpha t}$ and scale power
$\int_{\mathcal B_k}A_{\source\recv}^2\alpha^4/(\alpha^2+\omega^2)^2\,d\omega$. The
intermediate $\omega^{-2}$ regime of Corollary~\ref{cor:corner} requires well-separated
corners, $\amin\ll\amax$.

\paragraph{Discrete-time companion.} The sampled response
$\psi[n]=A_{\source\recv}(\rho_\recv^{\,n}-\rho_\source^{\,n})$, with
$\rho=e^{-\alpha\Delta t}$, has $z$-transform, on its region of convergence,
$A_{\source\recv}(\rho_\recv-\rho_\source)z^{-1}/[(1-\rho_\source z^{-1})
(1-\rho_\recv z^{-1})]$, an $\mathrm{ARMA}(2,1)$ filter; the boundary value $\psi[0]=0$
produces the single moving-average (sampling) zero. Expanding
$1-e^{-\alpha\Delta t}e^{-i\omega\Delta t}=(\alpha+i\omega)\Delta t+O(\Delta t^2)$ shows
that each discrete pole factor is proportional to the corresponding Lorentzian
denominator, so the discrete spectral density converges in shape to the
product-Lorentzian as $\alpha_\source\Delta t,\alpha_\recv\Delta t\to0$.
\bibliographystyle{plainnat}
\bibliography{references}

\end{document}